\begin{document}

\title{
L-Shape Based Layout Fracturing for E-Beam Lithography
}

\iffalse

\authorinfo{}{\vspace{-.2in}}

\else

\authorinfo{Bei Yu \and Jhih-Rong Gao \and  David Z. Pan}
           {ECE Dept. Univ. of Texas at Austin, Austin, TX USA}
           {\{bei, jrgao, dpan\}@cerc.utexas.edu \vspace{-.5in}}
\fi

\maketitle

\newtheorem{problem}{\textbf{Problem}}
\newtheorem{mydefinition}{\textbf{Definition}}
\newtheorem{mytheorem}{\textbf{Theorem}}
\newtheorem{mylemma}{\textbf{Lemma}}
\newtheorem{conjecture}{Conjecture}

\begin{abstract}

Layout fracturing is a fundamental step in mask data preparation and e-beam lithography (EBL) writing.
To increase EBL throughput, recently a new L-shape writing strategy is proposed, which calls for new L-shape fracturing, versus the conventional rectangular fracturing. Meanwhile, during layout fracturing, one must minimize very small/narrow features, also called slivers, due to manufacturability concern.
This paper addresses this new research problem of how to perform L-shaped fracturing with sliver minimization.
We propose two novel algorithms. The first one, rectangular merging (RM), starts from a set of rectangular fractures and merges them optimally to form L-shape fracturing. 
The second algorithm, direct L-shape fracturing (DLF), directly and effectively fractures the input layouts into L-shapes with sliver minimization.
The experimental results show that our algorithms are very effective.

\end{abstract}

\vspace{-.1in}
\section{Introduction}

E-Beam lithography (EBL) \cite{EBL_SPIE05_Pain} is widely deployed in the mask manufacturing,
which is a significant step affecting the fidelity of the printed image on the wafer and critical dimension (CD) control.
Because of the capability of accurate pattern generation, EBL is also a promising candidates for sub-22nm logic nodes,
along with extreme ultra violet (EUV) \cite{EUV_SPIE2010_Arisawa} and double/multiple patterning lithography (DPL/MPL) \cite{DPL_ICCAD08_Kahng}\cite{TPL_ICCAD2011_Yu}.
For EBL writing, a fundamental step is \textit{layout fracturing}, where the layout pattern is decomposed into numerous non-overlapping rectangles.
Subsequently the layout is prepared and exposed by an EBL writing machine onto the mask or the wafer,
where each fractured rectangle is shot by one variable shaped beam (VSB).

As the minimum feature size further decreases, the number of rectangles in the layout is steadily increased.
First, longer writing time and larger data volume are caused by highly complex optical proximity correction (OPC).
Besides, the introduction of advanced lithographic techniques, e.g., DPL/MPL, add more masks in the mask manufacturing.
Since the manufacturing cost is directly associated with increasing write time and data volume, the cost is also steadily increased.
In addition, the low throughput has been and is still the bottleneck of EBL writing.

To overcome this manufacturing problem, several optimization methods have been proposed to reduce the EBL writing time to a reasonable level
\cite{EBL_SPIE2010_Sahouria}\cite{EBL_SPIE2011_Elayat}\cite{EBL_TCAD2012_Yuan}.
Among them, the L-shape shot strategy is a very simple yet effective approach to reduce the e-beam mask writing time,
and thus reduce the mask manufacturing cost and improve the throughput
\cite{EBL_SPIE2010_Sahouria}\cite{EBL_SPIE2011_Elayat}.
Besides, this technique can be also applied to reduce the cost of lithographic process.
The conventional EBL writing is based on rectangular VSB shots.
As illustrated in Fig. \ref{fig:EBL}(a), the electrical gun generates an initial beam, which becomes uniform through the shaping aperture.
Then the second aperture finalizes the target shape with a limited maximum size.
As an improved technique, the printing process of the L-shape shot is illustrated in Fig. \ref{fig:EBL}(b).
One additional aperture, the third aperture, is employed to create L-shape shots.
To take advantage of this new printing process, new fracturing methodology is needed to provide L-shape in the fractured layout.
L-shape shot strategy has the potentiality to reduce the EBL writing time or cost by 50\% if all rectangles are combined into L-shapes.
For example in Fig. \ref{fig:example}, instead of four rectangles,
using L-shape fracturing only requires two L-shape shots.


\begin{figure}[bt]
  \centering
  \subfigure[]{\includegraphics[width=0.23\textwidth]{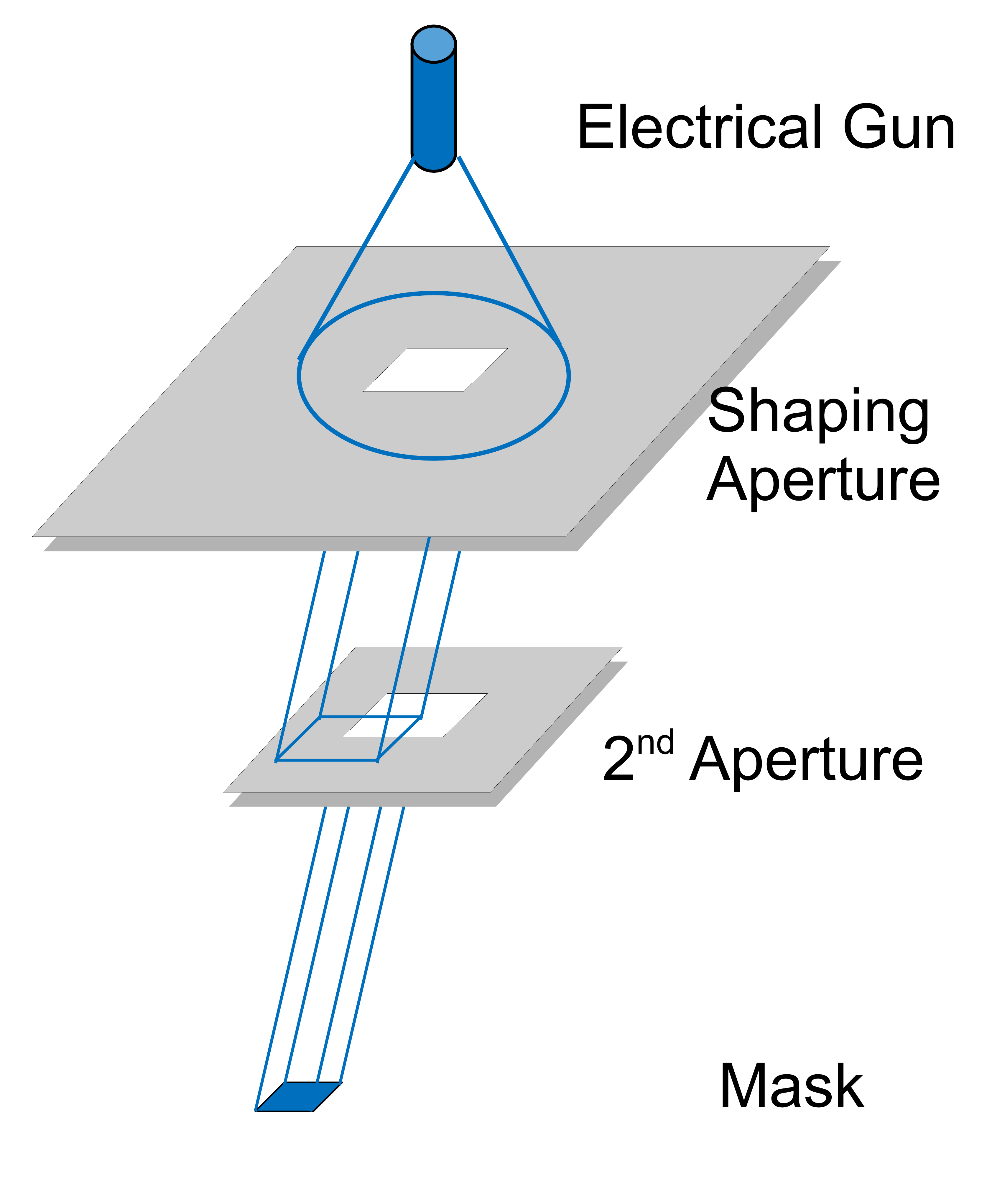}}
  \hspace{-.1in}
  \subfigure[]{\includegraphics[width=0.23\textwidth]{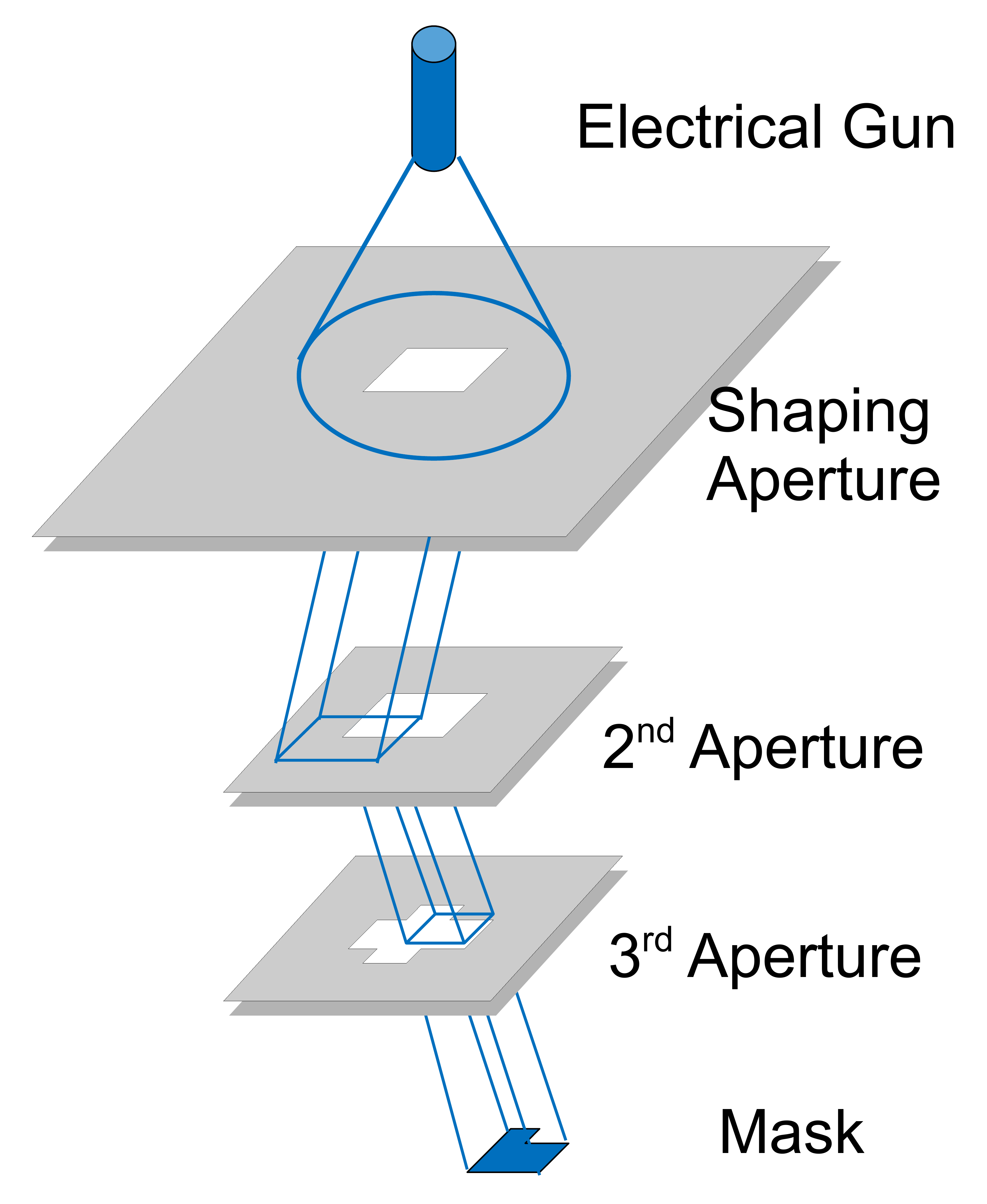}}
  \nocaptionrule
  \caption{~(a) Traditional rectangular EBL writing process.~(b) L-shape writing process with one additional aperture.}
  \label{fig:EBL}
  \vspace{-.1in}
\end{figure}

\begin{figure}[bt]
  \centering
  \subfigure[]{\includegraphics[width=0.14\textwidth]{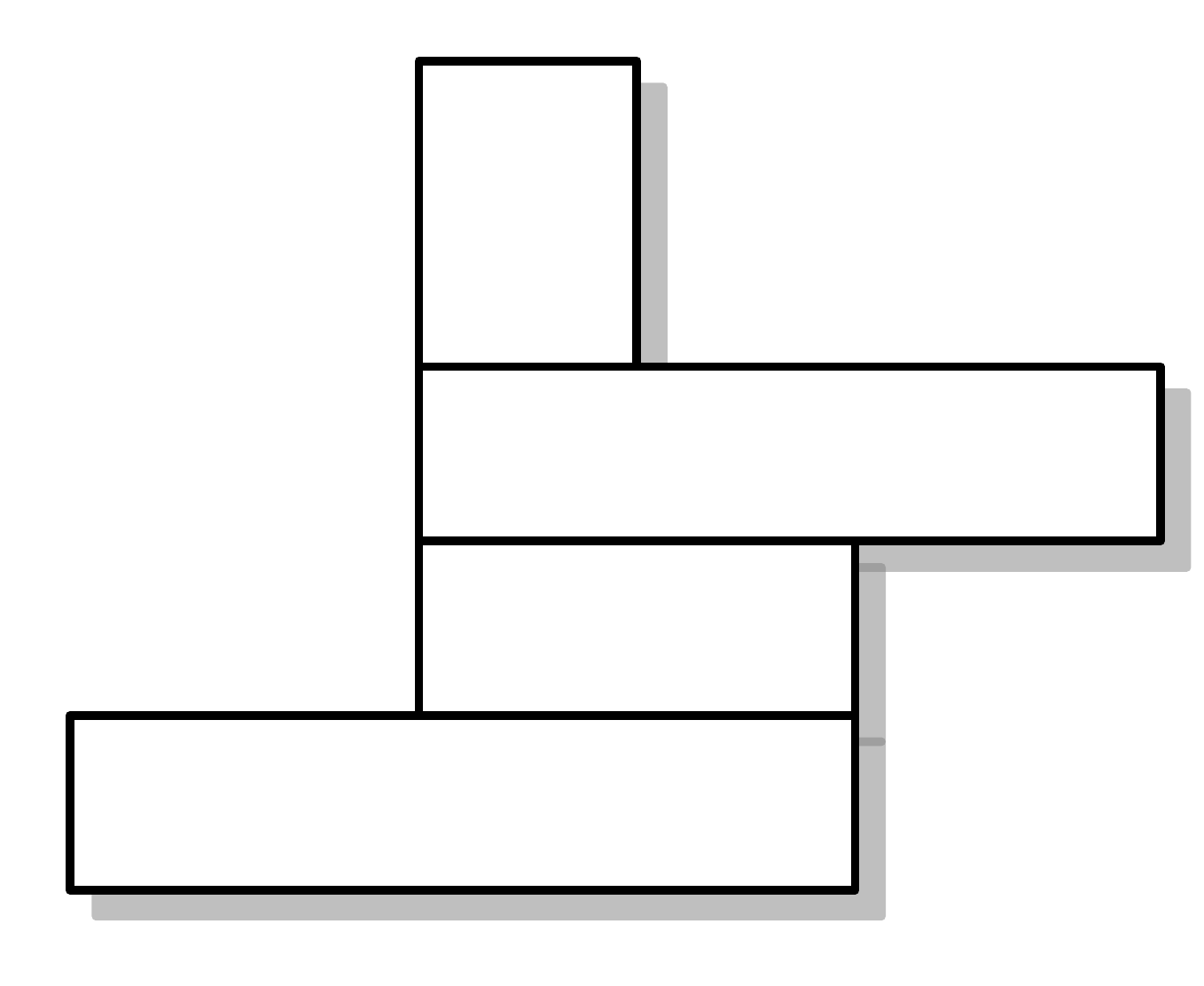}}
  \hspace{.1in}
  \subfigure[]{\includegraphics[width=0.14\textwidth]{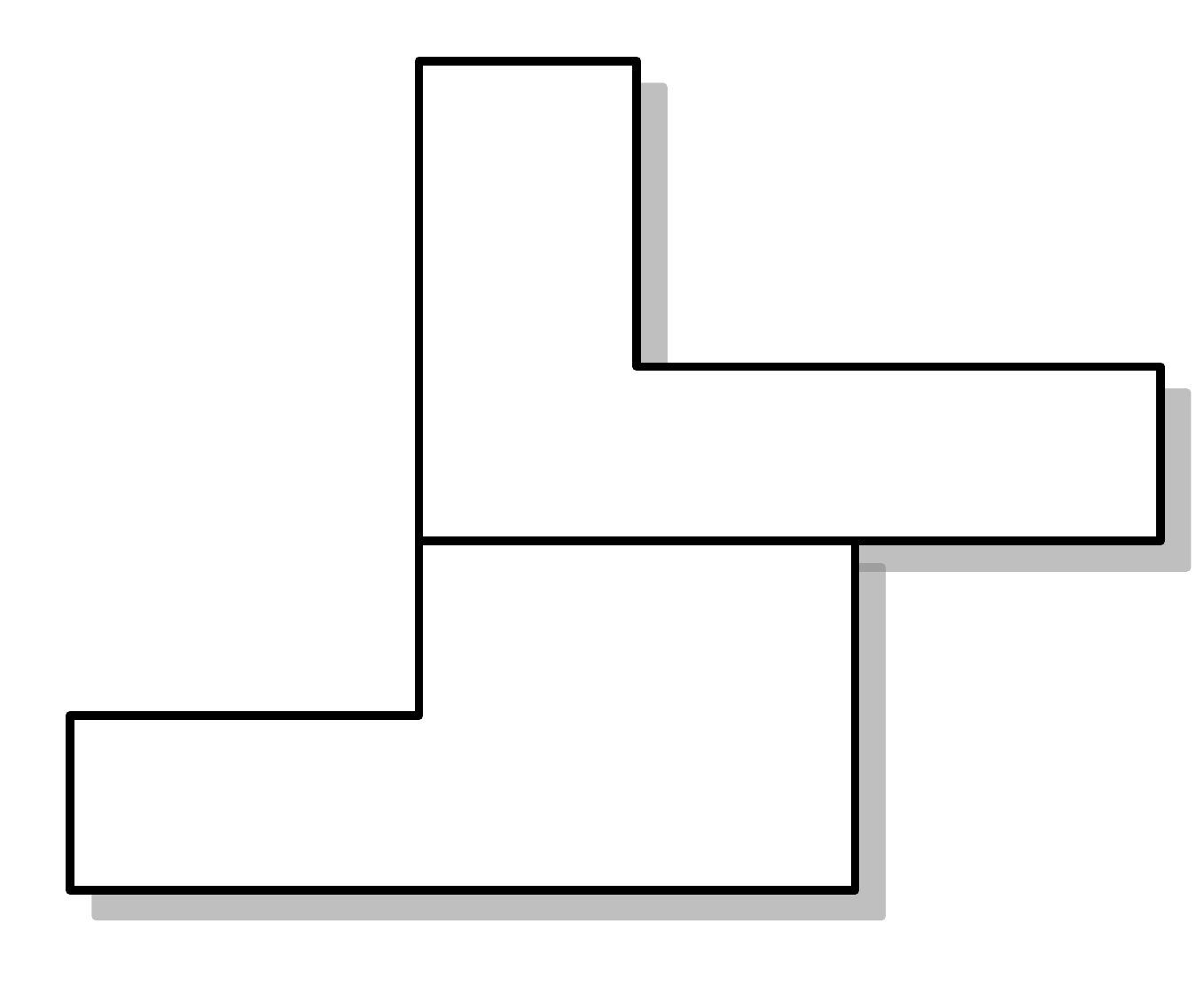}}
  \nocaptionrule
  \caption{Examples of polygon fracturing.~(a) Rectangular shots with 4 shot number.~(b) L-shape shots with 2 shot number.}
  \label{fig:example}
  \vspace{-.1in}
\end{figure}

Note that the layout fracturing problem is different from the general polygon decomposition problem in geometrical science.
In order to consider yield control and CD control,
the minimum width of each shot should be above a certain threshold value $\epsilon$.
A shot whose minimum width is $< \epsilon$ is called a \textit{sliver}.
In the layout fracturing, sliver minimization is an important objective \cite{EBL_SPIE04_Kahng}.
As shown in Fig. \ref{fig:sliver}, two fractured layouts can achieve the same shot number 2.
However, because of sliver, the fractured result in Fig. \ref{fig:sliver} (a) is worse than that in Fig. \ref{fig:sliver} (b).
It shall be noted that the layout in  Fig. \ref{fig:sliver} can be written in one L-shaped shot without any sliver.

\begin{figure}[tb]
  \centering
  \vspace{-.1in}
  \subfigure[]{
    \includegraphics[width=0.18\textwidth]{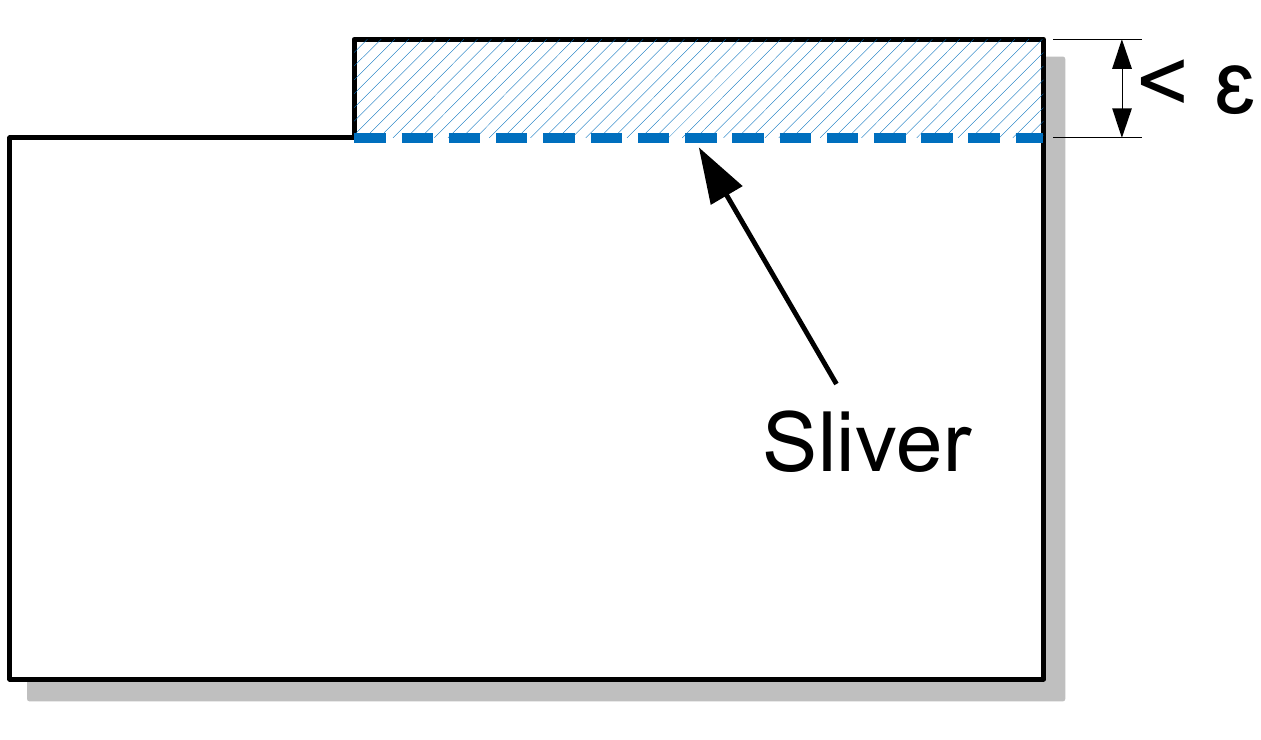}
  }
  \subfigure[]{
    \includegraphics[width=0.18\textwidth]{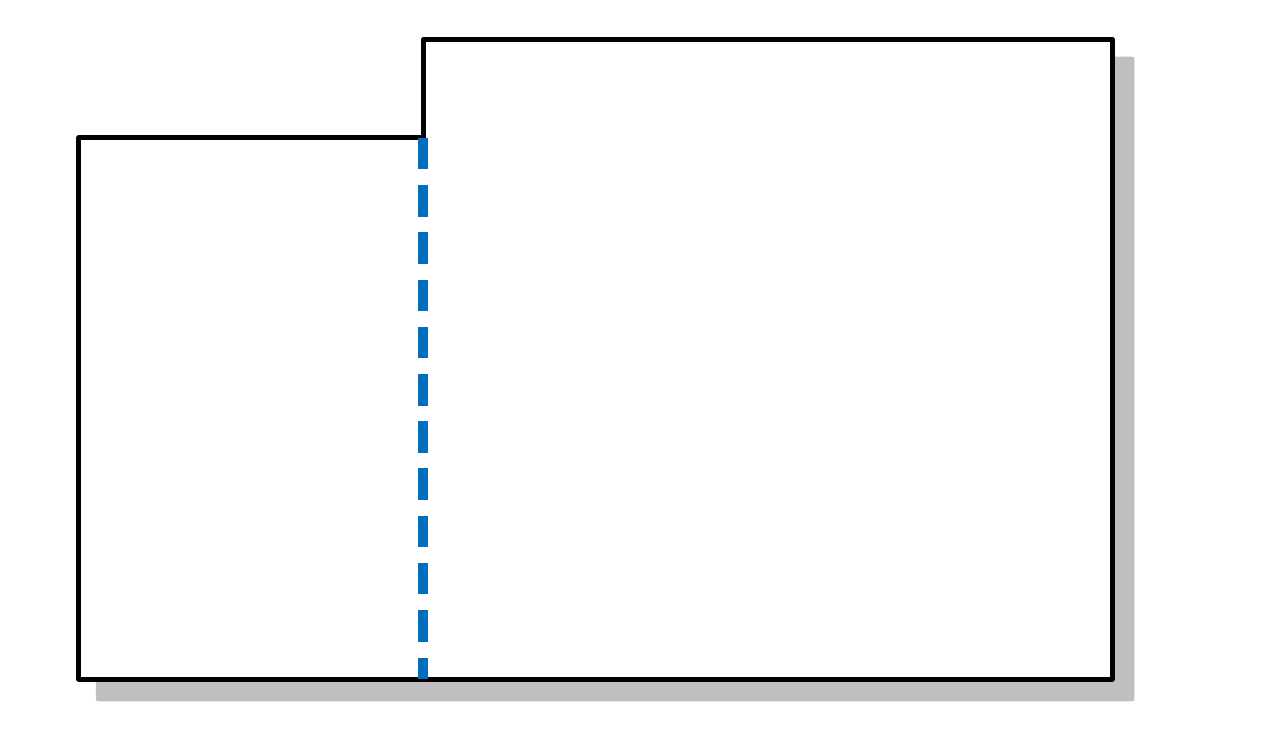}
  }
  \nocaptionrule
  \caption{~(a) Fracturing with one sliver.~(b) Fracturing without sliver.}
  \label{fig:sliver}
  \vspace{-.1in}
\end{figure}

For traditional rectangular shots, several papers have studied the layout fracturing problem
\cite{EBL_SPIE04_Kahng}\cite{EBL_SPIE06_Kahng}\cite{EBL_SPIE08_Dillon}\cite{EBL_SPIE2011_Ma}\cite{EBL_SPIE2011_Jiang}.
Kahng et. al proposed an integer linear programming (ILP) formulation, and some matching based speed-up techniques
\cite{EBL_SPIE04_Kahng}\cite{EBL_SPIE06_Kahng}.
Recently, Ma et. al \cite{EBL_SPIE2011_Ma} presented a heuristic algorithm to generate rectangular shots and further reduce the sliver.
Compared with the rectangular fracturing problem, the L-shape fracturing problem is new and there is only limited work,
mostly describing methodology, but no systematic algorithm has been proposed so far.
\cite{EBL_SPIE2010_Sahouria} reported the initial results that L-shape fracturing can further save about 38\% of shot count, but no algorithmic details are provided.
For the general decomposition problem of polygon into L-shapes, several heuristic methods are proposed
\cite{ALG_1984_Edelsbrunner}\cite{ALG_TODAES96_Lopez}.
However, since these heuristic methods only consider horizontal decomposition, which would result in numerous slivers, 
they cannot be applied to the layout fracturing problem.

This paper presents the first systematic study for EBL L-shape fracturing considering the sliver minimization.
We propose two algorithms for the L-shape fracturing problem.
The first method, called \textit{RM}, starts from rectangles generated by any previous fracturing framework, 
and merge them into L-shapes.
A maximum weighted matching algorithm is proposed to find the optimal merging solution,
where the shot count and the sliver can be minimized simultaneously.
To further overcome the intrinsic limitations of rectangular merging,
we propose another fracturing algorithm, called \textit{DLF}.
Through effectively detect and take advantage of some special cuts,
DLF can directly fracture the layout into a set of L-shapes in $O(n^2logn)$ time.
The experimental results show that our algorithms are very promising for both shot count reduction and sliver minimization.
In addition, DLF can even achieve significant speed-up compared with previous state-of-the-art rectangular fracturing algorithm \cite{EBL_SPIE2011_Ma}.

The rest of the paper is organized as follows.
Section \ref{sec:problem} presents the basics and problem formulation.
Section \ref{sec:post} provides RM, the merging based algorithm, which will also be used as a baseline.
In Section \ref{sec:algo} we propose the DLF algorithm to directly fracture polygons into L-shapes.
Section \ref{sec:result} presents experimental results, followed by conclusion in Section \ref{sec:conclu}.

\vspace{-.1in}
\section{Definitions and Problem Formulation}
\label{sec:problem}

We first introduce some notations and definitions to facilitate the problem formulation.
For convenience, we use the term polygon to refer to rectilinear polygons in the rest of this paper.

Let $P$ be an input polygon with $n$ vertices, we define the concave vertices as follows.
\begin{mydefinition}[Concave Vertex]
The concave vertex of a polygon is one at which the internal angle is $270^o$.
\end{mydefinition}

Let $c$ be the number of concave vertices in $P$,
\cite{JOG83_Rourke} gave the relationship between $n$ and $c$: $n = 2 c + 4$.
If the number of concave vertices $c$ is odd, polygon $P$ is called \textit{odd polygon}; otherwise, $P$ is called \textit{even polygon}.

\begin{mydefinition}[Cut]
A cut of a polygon $P$ is a horizontal or vertical line segment at least one of whose endpoints is incident on a concave vertex.
The other endpoint is obtained by extending the line segment inside $P$ until it first encounters the boundary of $P$.
\end{mydefinition}

If both endpoints of a cut are concave vertices in the original polygon, then the cut is called a \textit{chord}.
If a cut has odd number of concave vertices to one side or another, then the cut is called an \textit{odd-cut}.
If an cut is not only odd-cut but also chord, it is called an \textit{odd-chord}.
These concepts are illustrated in Fig. \ref{fig:CutConcept}, where vertices $b, e, h$ are concave vertices,
edges $\bar{bh}, \bar{ej}$ are odd-cuts, and edge $\bar{bh}$ is chord.
Note that $\bar{bh}$ is an odd-chord.

\begin{figure}[bht]
  \centering
  \includegraphics[width=0.22\textwidth]{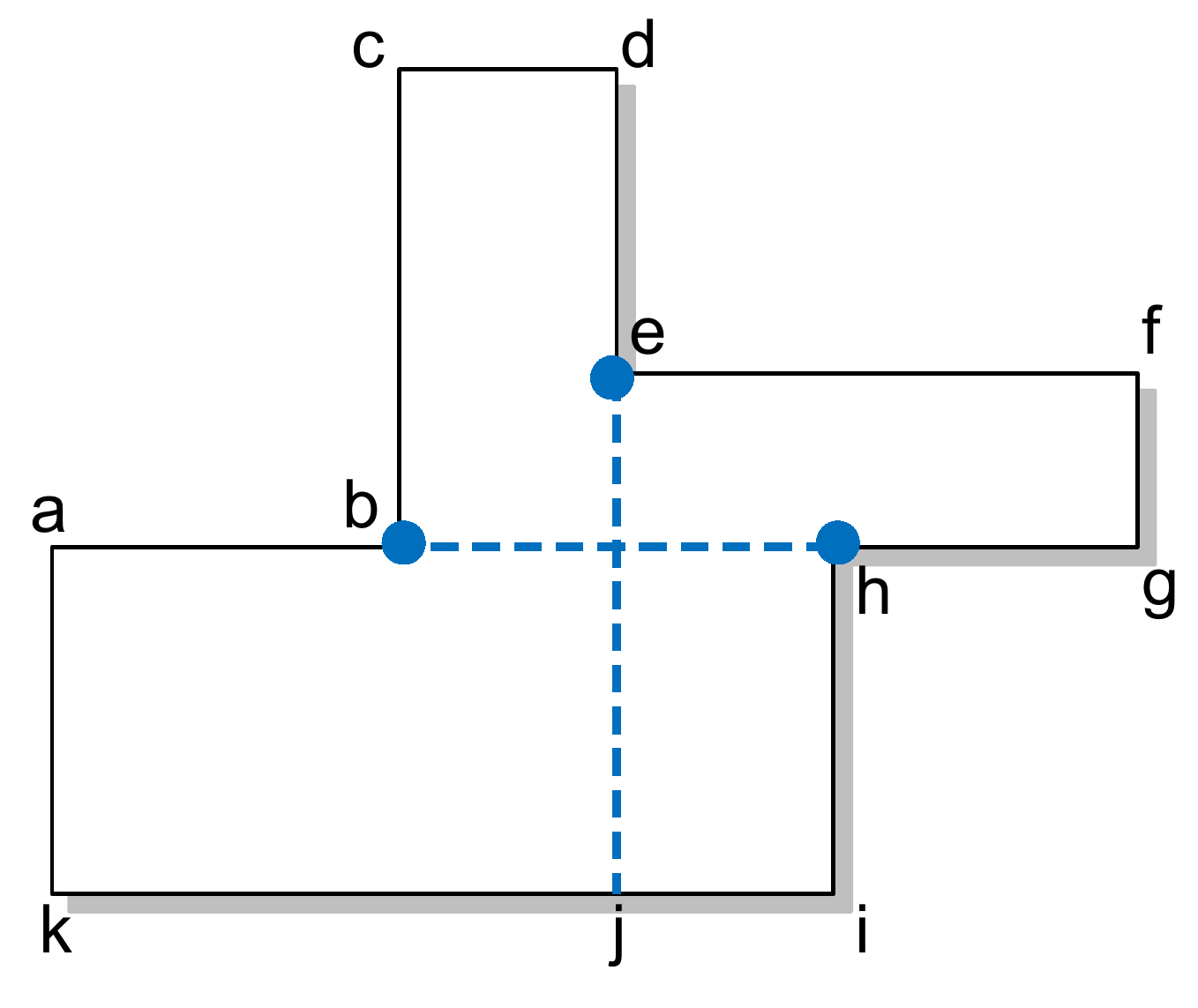}
  \nocaptionrule
  \caption{Concepts of concave vertices and cuts.}
  \label{fig:CutConcept}
  \vspace{-.1in}
\end{figure}

\begin{mydefinition}[L-shape]
An L-shape is a polygon shaped in the form of the letter L.
\end{mydefinition}

An L-shape can be also viewed as a combination of two rectangles with a common coordinate.
There are two easy ways to check whether a polygon is an L-shape.
First, we can check whether the number of vertices equals to $6$, i.e., $n=6$.
Besides, we can check whether there is only one concave vertex, i.e., $c=1$.

\begin{mydefinition}[Sliver Length]
For an L-shape or a rectangle, if the width of its bounding box $B$ is above $\epsilon$, its sliver length is 0.
Otherwise, the sliver length is the length of $B$.
\end{mydefinition}

\begin{problem}[L-shape based Layout Fracturing]
Given an input layout which is specified by polygons, our goal is to fracture it into a set of L-shapes and/or rectangles to minimize the number of shots, and meanwhile minimize the silver length of fractured shots.
\end{problem}

\vspace{-.1in}
\section{Rectangular Merging (RM) Algorithm}
\label{sec:post}

Given the rectangles generated by any rectangular fracturing methodology,
we propose an algorithm, called RM, to merge them into a set of L-shapes.
The main idea is that if two rectangles share a common coordinate, they can be combined into one L-shape.
Although this idea is straightforward, the benefit is obvious that the previous rectangular fracturing algorithms can be re-used.
Besides, the RM algorithm is used as a baseline in comparison with our another algorithm, DLF, which will be described in Section \ref{sec:algo}.

\begin{figure}[bht]
  \centering
  \vspace{-.1in}
  \subfigure[]{
    \includegraphics[width=0.15\textwidth]{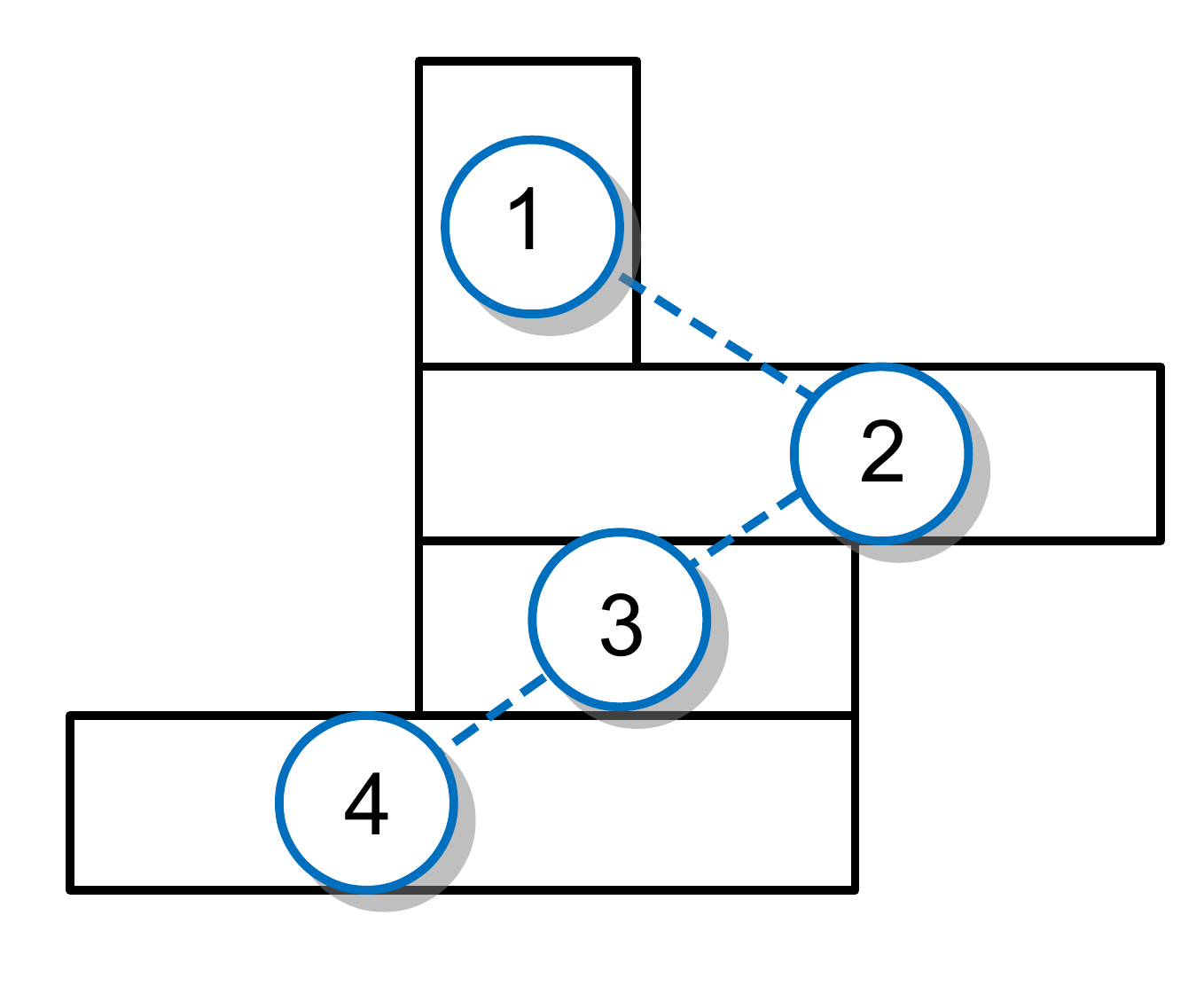}
  }
  \hspace{-.1in}
  \subfigure[]{
    \includegraphics[width=0.15\textwidth]{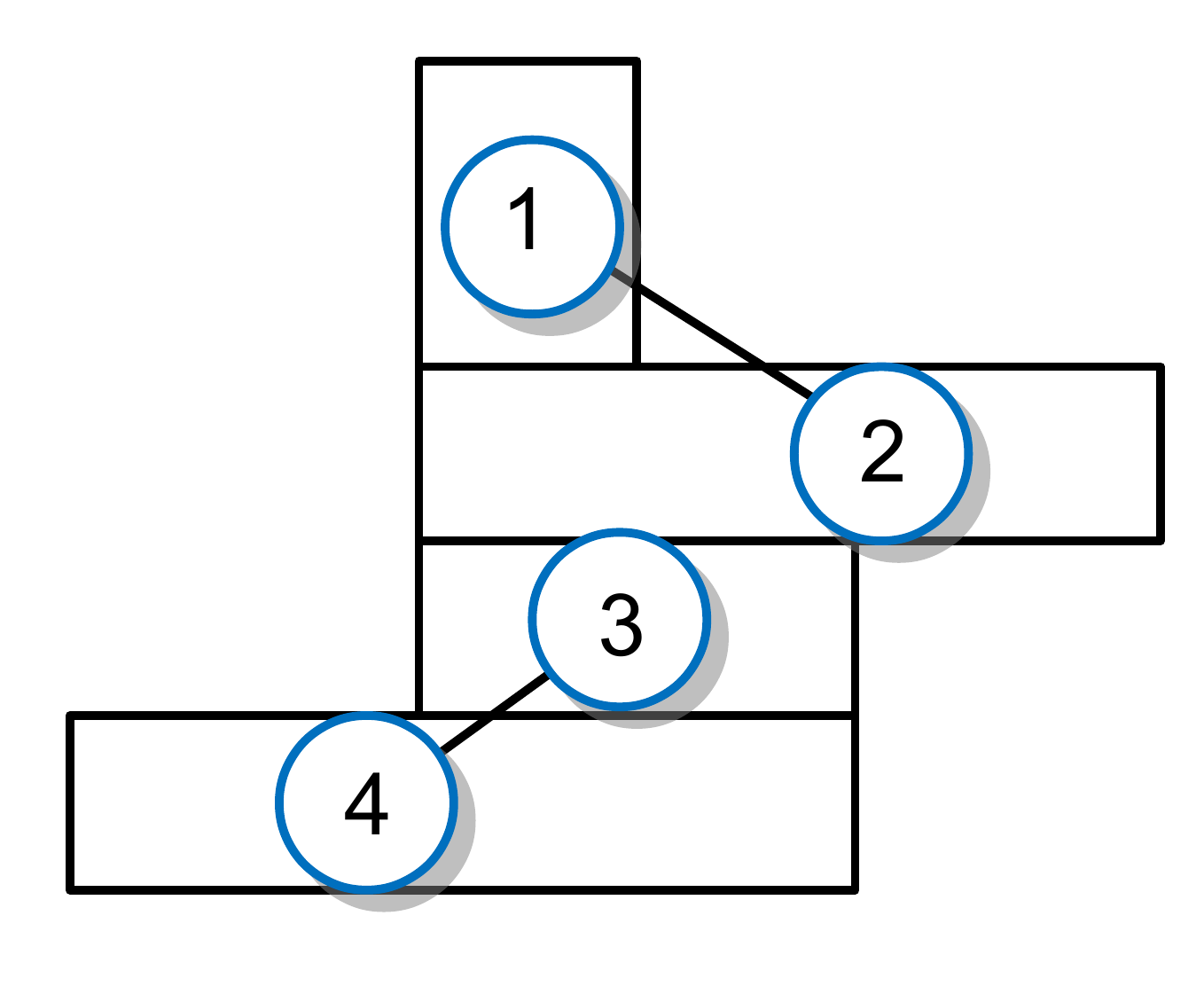}
  }
  \hspace{-.1in}
  \subfigure[]{
    \includegraphics[width=0.15\textwidth]{MergeGraph4}
  }
  \nocaptionrule
  \caption{Example of RM algorithm.~(a) Graph construction.~(b) Maximum matching result.~(c) Corresponding rectangular merging.}
  \label{fig:MergeGraph}
  \vspace{-.1in}
\end{figure}

Given the input rectangles, the RM algorithm can find the optimal L-shape merging solution.
Meanwhile, the shot count and sliver length can be minimized simultaneously.

First we construct a merging graph $G$ to represent the relationships among all the input rectangles.
Each vertex in $G$ represents a rectangle.
There is an edge between two vertices if and only if those two rectangles can be merged into an L-shape.
For example shown in Fig. \ref{fig:MergeGraph}, after rectangular fracturing, four rectangles are generated.
The constructed merging graph $G$ is illustrated in Fig. \ref{fig:MergeGraph}(a),
where the three edges show that there are three ways to generate L-shapes.
L-shape merging can be viewed as edge selection from the merging graph $G$.
Note that one rectangle can only be assigned to one selected edge, that is, no two selected edges share a common end point.
For example, rectangle 2 can only belongs to one L-shape, and thus only one edge can be chosen between edges $\bar{12}$ and $\bar{23}$.

By utilizing the merging graph, the best edge selection can be solved by finding a maximum matching.
Therefore, the rectangular merging can be formulated as a maximum matching problem.
In the case of Fig. \ref{fig:MergeGraph}, the result of the maximum matching is illustrated in Fig. \ref{fig:MergeGraph}(b), and the corresponding L-shape fracturing result is shown in Fig. \ref{fig:MergeGraph}(c).

In order to consider the sliver minimization, we assign weights to the edges to represent whether the merging would remove one sliver.
For example, if there is still one sliver even two rectangles $v_i$ and $v_j$ are merged into one L-shape,
we assign less weight to edge $e_{ij}$. 
Otherwise, larger weight is assigned.
Therefore, the rectangular merging can be formulated as maximum weighted matching.
Even in general graphs, the maximum weighted matching can be solved in $O(nmlogn)$ time \cite{match_1986Galil}, where the $n$ is the number of vertices, and the $m$ is the number of edges in $G$.

\vspace{-.1in}
\section{Direct L-Shape Fracturing (DLF) Algorithm}
\label{sec:algo}

Although the RM algorithm described above can provide the optimal merging solution for given rectangles,
it may suffer from several limitations.
First, the polygon is fractured into rectangles first, and followed by a merging stage.
This strategy, however, has some redundant or unnecessary operations.
For the case in Fig. \ref{fig:example}, instead of complex rectangles generation, only one cut is enough for the L-shape fracturing.
Second, the rectangular fracturing may ignore some internal features of L-shape fracturing, which could sacrifices the whole performance.
To overcome all these limitations, in this section we propose a novel algorithm, called DLF, to directly fracture polygon into L-shapes.

We observe that the solution space for the L-shape fracturing can be very large.
Given a polygon, there can exist several fracturing solutions with the same shot count.
For example, as shown in Fig. \ref{fig:SolutionSpace}, the input polygon has at least five different fracturing solutions with two shots.

\begin{figure}[bht]
  \centering
  \subfigure[]{\includegraphics[width=0.09\textwidth]{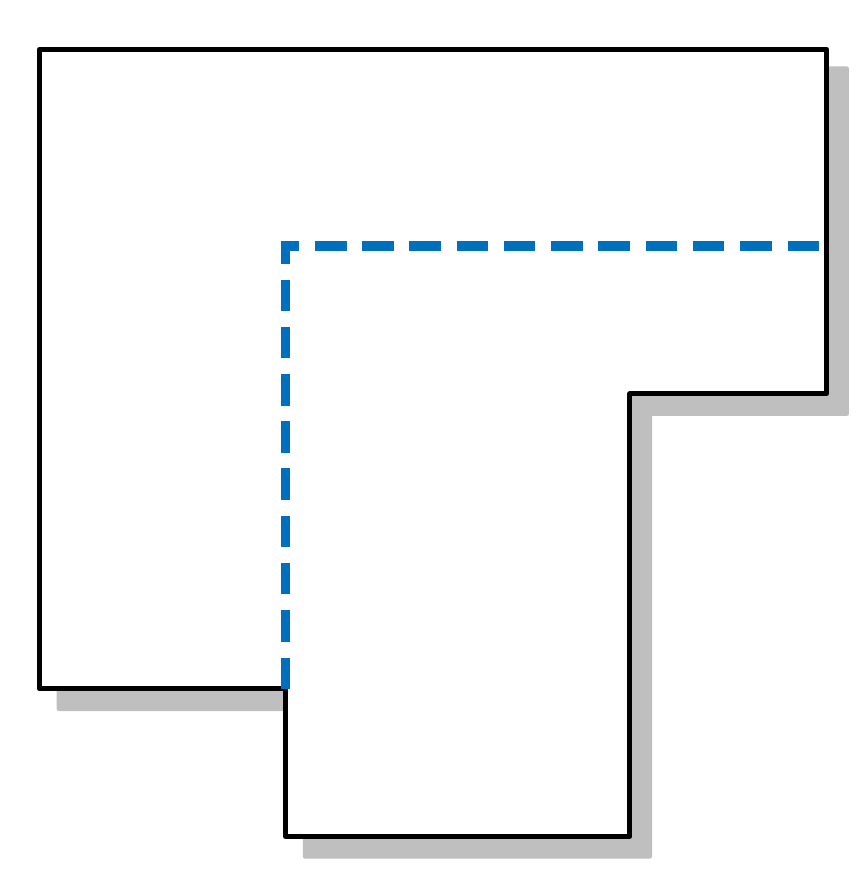}}
  \subfigure[]{\includegraphics[width=0.09\textwidth]{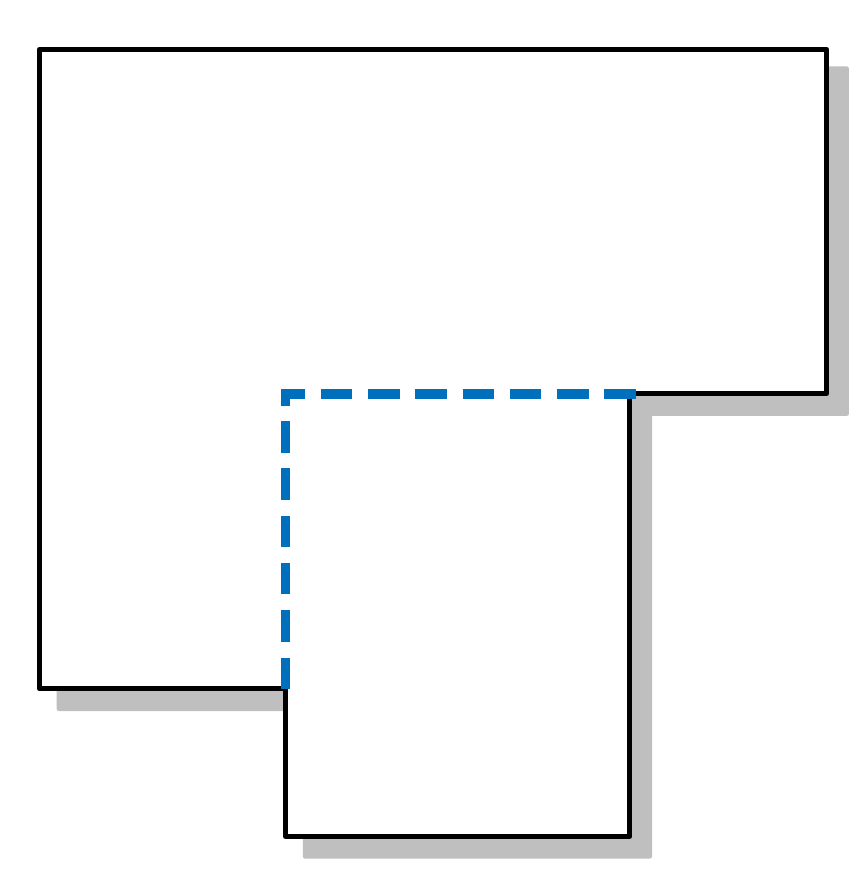}}
  \subfigure[]{\includegraphics[width=0.09\textwidth]{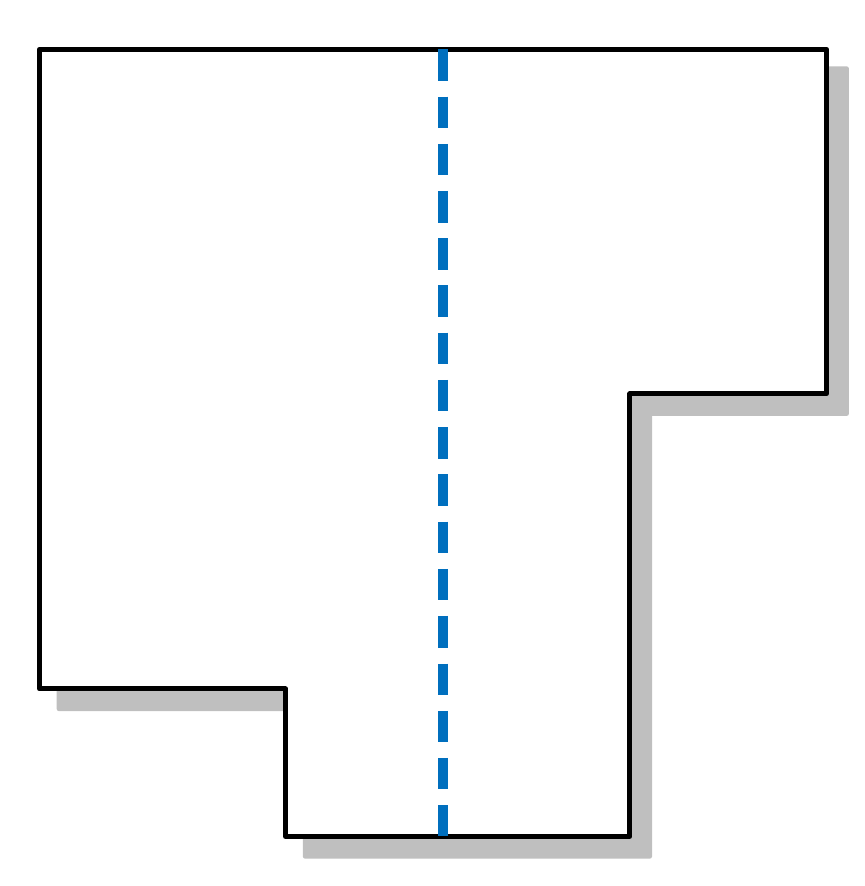}}
  \subfigure[]{\includegraphics[width=0.09\textwidth]{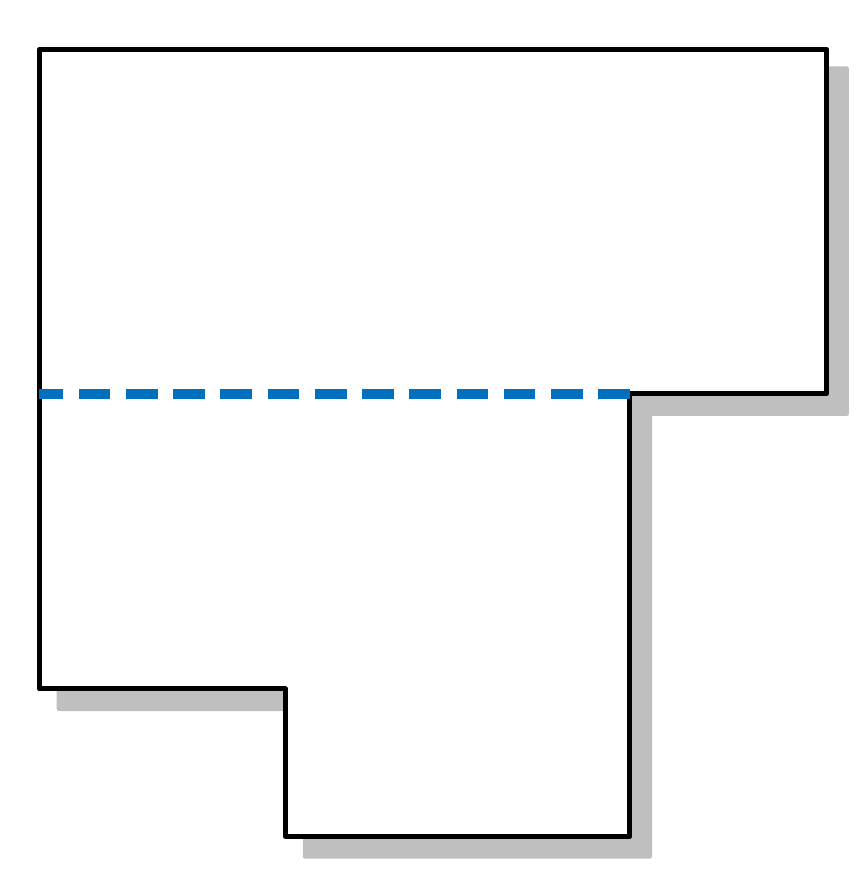}}
  \subfigure[]{\includegraphics[width=0.09\textwidth]{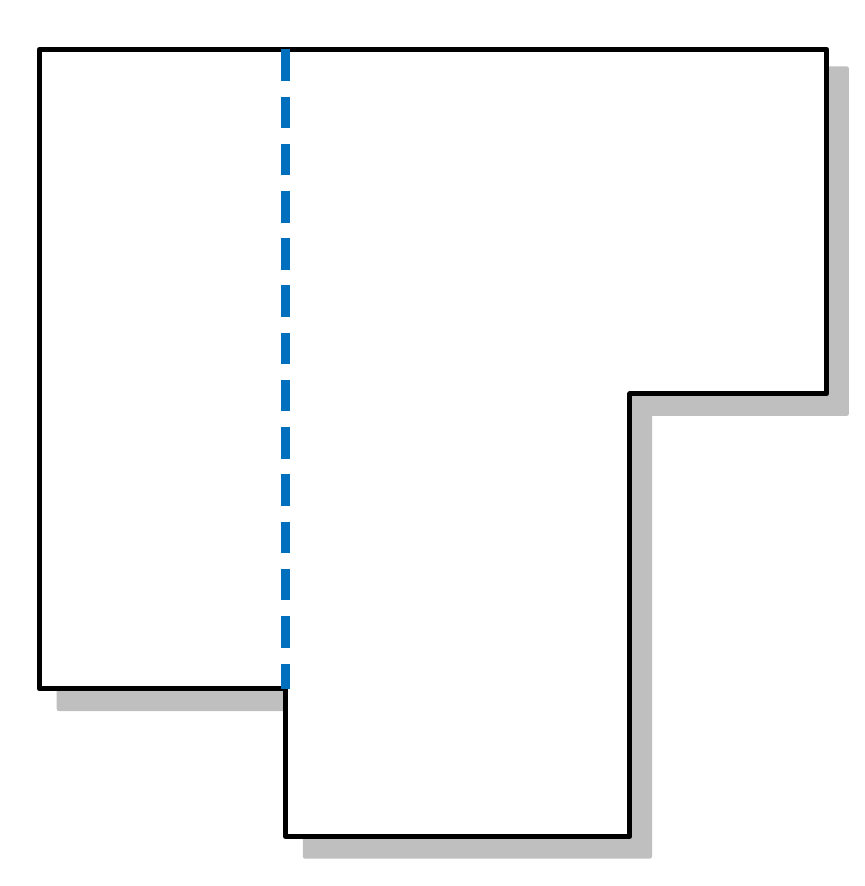}}
  \nocaptionrule
  \caption{Five fracturing solutions with the same shot count.}
  \label{fig:SolutionSpace}
  \vspace{-.1in}
\end{figure}

Note that a cut has the following property: if the polygon is decomposed through a cut,
the concave vertex that is one of the endpoints of the cut is no longer concave in either of the two resulting polygons.
Our L-shape fracturing algorithm, DLF, takes advantage of this property.
Each time a polygon is cut, DLF searches one appropriate odd-cut to decompose the polygon.
It was shown in \cite{JOG83_Rourke} that odd-cut always exists
and $\lfloor c/2 \rfloor + 1$ ``guards'' are necessary and sufficient to cover all the interiors of a polygon with $c$ concave vertices.
Therefore, we can obtain the following lemma.

\begin{mylemma}
\label{lem:concave_vertice}
A polygon with $c$ concave vertices can be decomposed into L-shapes with upper bound number $N_{up} = \lfloor c/2 \rfloor +1$.
\end{mylemma}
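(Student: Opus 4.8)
The plan is to prove the bound by induction on the number of concave vertices $c$, using the existence of an odd-cut at each step to reduce $c$ in a controlled way. The base cases are immediate from the earlier characterization of L-shapes: if $c=0$ the polygon is a rectangle (one shot), and if $c=1$ the polygon is itself an L-shape (one shot); in both cases $\lfloor c/2\rfloor+1 = 1$, so the bound holds. The inductive hypothesis will be that every polygon with fewer than $c$ concave vertices can be decomposed into at most $\lfloor c'/2\rfloor+1$ L-shapes, where $c'$ is its own concave-vertex count.

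\medskip

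\noindent \emph{Inductive step.} For a polygon $P$ with $c \ge 2$ concave vertices, I would invoke the result of \cite{JOG83_Rourke} quoted just above the lemma: an odd-cut always exists. Applying this odd-cut decomposes $P$ into two subpolygons $P_1$ and $P_2$. The key structural fact to exploit is the cut property stated immediately before the lemma: the concave vertex serving as an endpoint of the cut ceases to be concave in both resulting pieces. Hence that vertex is ``consumed,'' and the remaining $c-1$ concave vertices are distributed between $P_1$ and $P_2$, say $c_1$ and $c_2$ with $c_1 + c_2 = c-1$. By the \emph{odd}-cut condition, each side has an odd number of concave vertices, so both $c_1$ and $c_2$ are odd (in particular each piece genuinely has fewer concave vertices than $P$, justifying the induction). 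Applying the inductive hypothesis to each piece yields a decomposition of $P$ into at most $(\lfloor c_1/2\rfloor+1) + (\lfloor c_2/2\rfloor+1)$ L-shapes.

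\medskip

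\noindent \textbf{The main obstacle} is the final arithmetic: showing that this sum is bounded by $\lfloor c/2\rfloor+1$. The crucial leverage is that $c_1$ and $c_2$ are both odd. For odd $c_i$ we have $\lfloor c_i/2\rfloor = (c_i-1)/2$, so
\begin{equation}
\Bigl(\tfrac{c_1-1}{2}+1\Bigr) + \Bigl(\tfrac{c_2-1}{2}+1\Bigr) = \frac{c_1+c_2}{2}+1 = \frac{c-1}{2}+1.
\end{equation}
Since $c_1+c_2 = c-1$ is even, $c$ is odd, and $\lfloor c/2\rfloor = (c-1)/2$, so this equals exactly $\lfloor c/2\rfloor+1$, matching the claimed bound. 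The subtlety I would be careful about is that the oddness of \emph{both} sides is precisely what forces $c$ to be odd here; for the even-$c$ case the same odd-cut existence result must be applied, but one side will have an even count, and I would need to verify that the floor arithmetic still closes (using $\lfloor c_i/2\rfloor \le c_i/2$ and the parity bookkeeping) so that the inductive conclusion does not exceed $\lfloor c/2\rfloor+1$. Handling this parity case analysis cleanly, rather than the cut geometry itself, is where the real care is needed.
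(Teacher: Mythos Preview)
The paper does not actually prove Lemma~\ref{lem:concave_vertice}; it simply imports it from O'Rourke~\cite{JOG83_Rourke} (the sentence immediately before the lemma states that the odd-cut existence and the $\lfloor c/2\rfloor+1$ bound are ``shown in~\cite{JOG83_Rourke}''). Your inductive argument is essentially a reconstruction of O'Rourke's proof, so you are supplying what the paper only cites. The overall plan---induct on $c$, peel off an odd-cut, apply the hypothesis to the two pieces, and close the floor arithmetic---is exactly the right one.

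One genuine slip: you write ``by the odd-cut condition, each side has an odd number of concave vertices.'' That is not the definition used here; an odd-cut only guarantees that \emph{at least one} side has an odd count. You partially recover from this at the end when you notice the even-$c$ case must be handled separately, but the logic should run the other way: start from ``one side, say $c_1$, is odd,'' then split on the parity of $c$. With $c_1+c_2=c-1$ (non-chord case) you get, for $c$ odd, both $c_i$ odd and the sum $(c-1)/2+1=\lfloor c/2\rfloor+1$ as you computed; for $c$ even, $c_2$ is even and
\[
\Big\lfloor \tfrac{c_1}{2}\Big\rfloor+1+\Big\lfloor \tfrac{c_2}{2}\Big\rfloor+1=\tfrac{c_1-1}{2}+\tfrac{c_2}{2}+2=\tfrac{c-2}{2}+2=\tfrac{c}{2}+1=\Big\lfloor\tfrac{c}{2}\Big\rfloor+1,
\]
so the bound closes exactly in that case too---no slack and no extra trick needed. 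A second minor omission: if the chosen odd-cut happens to be a chord, then $c_1+c_2=c-2$ rather than $c-1$; this only helps (cf.\ the paper's Lemmas~\ref{lem:chord}--\ref{lem:odd-chord}), but you should say so. With those two clarifications your argument is complete.
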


\begin{figure}[bht]
  \centering
  \vspace{-.1in}
  \includegraphics[width=0.42\textwidth]{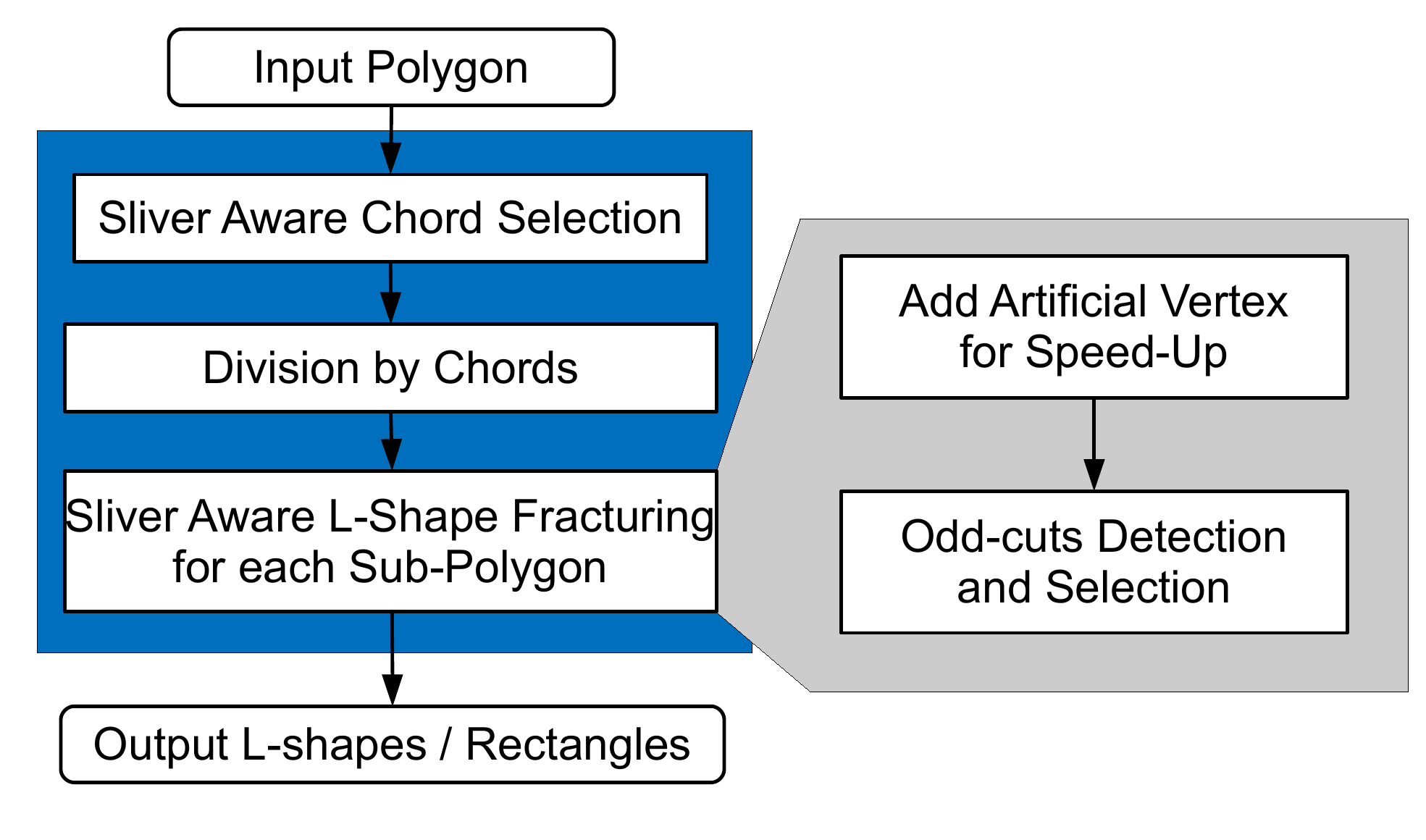}
  \nocaptionrule
  \caption{Overall flow of DLF algorithm.}
  \label{fig:flow}
  \vspace{-.1in}
\end{figure}

Fig. \ref{fig:flow} shows the overall flow of our DLF algorithm. We will effectively use chords and cuts to reduce the problem size while containing or even reducing  the L-shape fracturing number upper bound.  The first step is to detect all chords (i.e., horizontal or vertical cuts made by concave points), in particular odd-chords as they may reduce the L-shape upper bound.  We will then perform sliver-aware chord selection to decompose the original polygon $P$ into a set of sub-polygons. Then for each sub-polygon, we will perform sliver aware L-shape fracturing, where odd-cuts are detected and selected to iteratively cut the polygon into a set of L-shapes. The reason we differ chord and cut during polygon fracturing is that chord is a special cut with both end points being concave points in the original polygon. That way, we can design more efficient algorithm for odd cut/chord detection.


\vspace{-.1in}
\subsection{Sliver Aware Chord Selection}

The first step of DLF algorithm is sliver aware chord selection.
Cutting through chords decomposes the whole polygon $P$ into a set of sub-polygons. By this way the problem size is reduced.
We can further prove that cutting through a chord does not increase the L-shape upper bound $N_{up}$.

\begin{mylemma}
\label{lem:chord}
Decompose a polygon through a chord does not increase the L-shape upper bound number $N_{up}$.
\end{mylemma}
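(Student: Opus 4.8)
The plan is to reduce the claim to a purely arithmetic statement about floors of concave-vertex counts and then invoke Lemma~\ref{lem:concave_vertice}. Cutting $P$ along a chord produces two sub-polygons $P_1$ and $P_2$; write $c$, $c_1$, $c_2$ for their respective numbers of concave vertices. Since the upper bound depends only on the concave-vertex count, it suffices to show $N_{up}(P_1) + N_{up}(P_2) \le N_{up}(P)$, i.e. $(\lfloor c_1/2\rfloor + 1) + (\lfloor c_2/2\rfloor + 1) \le \lfloor c/2\rfloor + 1$.

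First I would establish the key bookkeeping relation $c_1 + c_2 = c - 2$. Both endpoints of a chord are concave vertices of $P$, each with internal angle $270^{\circ}$. Because the chord is axis-aligned and the two polygon edges meeting at such a vertex are perpendicular, the chord splits the $270^{\circ}$ angle into $90^{\circ}$ and $180^{\circ}$; hence that vertex becomes convex in one sub-polygon and collinear (so not a vertex) in the other, and in particular is no longer concave in either piece --- this is exactly the cut property noted just before Lemma~\ref{lem:concave_vertice}. The cut itself introduces no new concave vertices, and every other concave vertex of $P$ is inherited unchanged by exactly one of $P_1, P_2$. Thus the two chord endpoints are the only concave vertices lost, giving $c_1 + c_2 = c - 2$.

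With this relation in hand, the remaining step is arithmetic. Using $c = c_1 + c_2 + 2$ we have $\lfloor c/2\rfloor = \lfloor (c_1+c_2)/2\rfloor + 1$, so the target inequality reduces to $\lfloor c_1/2\rfloor + \lfloor c_2/2\rfloor \le \lfloor (c_1+c_2)/2\rfloor$, which is simply the superadditivity of the floor function, $\lfloor x\rfloor + \lfloor y\rfloor \le \lfloor x+y\rfloor$. Chaining the estimates then yields $N_{up}(P_1)+N_{up}(P_2) = \lfloor c_1/2\rfloor + \lfloor c_2/2\rfloor + 2 \le \lfloor c/2\rfloor + 1 = N_{up}(P)$, as desired.

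I expect the main obstacle to be the vertex counting in the second paragraph rather than the algebra. One must argue carefully that the $270^{\circ}$ angle at each chord endpoint genuinely splits into $90^{\circ} + 180^{\circ}$, so that exactly one incidence survives as a convex corner while the other disappears, and that no concave vertex is created anywhere along the cut. Once $c_1 + c_2 = c - 2$ is pinned down, superadditivity of the floor closes the argument immediately.
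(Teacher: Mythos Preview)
Your proposal is correct and follows essentially the same approach as the paper: establish $c = c_1 + c_2 + 2$ and then apply Lemma~\ref{lem:concave_vertice} together with the superadditivity of the floor. The paper's proof is terser---it simply asserts the relation $c = c_1 + c_2 + 2$ without your angle-splitting justification---so your second paragraph supplies detail that the paper leaves implicit.
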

\begin{proof}
Cut the polygon along this chord, and let $c_1$ and $c_2$ be the number of concave vertices in the two pieces produced.
Since $c = c_1 + c_2 +2$, then using Lemma~\ref{lem:concave_vertice} we have
\begin{eqnarray}
  \lfloor c_1/2 \rfloor + 1 + \lfloor c_2/2 \rfloor + 1 & \le & \lfloor (c_1 + c_2)/2 \rfloor + 2  \notag\\
                                                        &  =  & \lfloor (c-2)/2 \rfloor + 2 = \lfloor c/2 \rfloor + 1            \notag
\end{eqnarray}
\end{proof}

Chord selection has been proposed in rectangular fracturing \cite{EBL_SPIE06_Kahng}\cite{EBL_SPIE2011_Ma}, but for L-shape fracturing, odd-chord shall be selected as they can even reduce the number of L-shapes. 

\begin{mylemma}
\label{lem:odd-chord}
Decomposing a even polygon along an odd-chord can reduce the L-shape upper bound number $N_{up}$ by $1$.
\end{mylemma}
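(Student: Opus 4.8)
The plan is to mirror the counting argument from the proof of Lemma~\ref{lem:chord}, but to exploit the extra parity information carried by an odd-chord. First I would cut the even polygon along the given odd-chord, producing two sub-polygons with $c_1$ and $c_2$ concave vertices respectively. As established in the proof of Lemma~\ref{lem:chord}, the two chord endpoints cease to be concave in either piece, so the remaining $c-2$ concave vertices are partitioned between the two sides, giving the relation $c = c_1 + c_2 + 2$.

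Next I would extract the parity of $c_1$ and $c_2$. Since the chord is an odd-cut, by definition one of the two sides contains an odd number of concave vertices; say $c_1$ is odd. Because the original polygon is even, $c$ is even, hence $c_1 + c_2 = c - 2$ is also even. Combined with $c_1$ odd, this forces $c_2$ to be odd as well. The essential point is therefore that an odd-chord in an even polygon makes \emph{both} resulting sub-polygons odd.

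With both $c_1$ and $c_2$ odd, I would apply Lemma~\ref{lem:concave_vertice} to each piece and evaluate the floors exactly, using $\lfloor c_i/2 \rfloor = (c_i - 1)/2$ for $i \in \{1,2\}$. Summing the two per-piece bounds gives
\begin{eqnarray}
  \left(\lfloor c_1/2 \rfloor + 1\right) + \left(\lfloor c_2/2 \rfloor + 1\right)
  & = & \frac{c_1 + c_2}{2} + 1 \notag\\
  & = & \frac{c-2}{2} + 1 = \frac{c}{2}. \notag
\end{eqnarray}
Since $c$ is even, the original bound is $N_{up} = \lfloor c/2 \rfloor + 1 = c/2 + 1$, so the new total is exactly one smaller, establishing the claimed reduction by $1$.

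I expect the only real subtlety to be the parity bookkeeping in the second step: verifying that the odd-cut hypothesis together with the even-polygon hypothesis truly forces \emph{both} sides to be odd, rather than just one. It is precisely this double-oddness that lets each floor shave off the full half-unit and collapse the two separate $+1$ contributions into a single $+1$; the final arithmetic is then routine.
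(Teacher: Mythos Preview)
Your argument is correct and follows essentially the same route as the paper's own proof: cut along the chord, use $c = c_1 + c_2 + 2$, observe that $c$ even together with the odd-cut hypothesis forces both $c_1$ and $c_2$ to be odd, and then compare the summed per-piece bounds from Lemma~\ref{lem:concave_vertice} to the original $N_{up}$. The paper merely asserts ``since $c$ is even and $c_1,c_2$ are odd'' and concludes with a strict inequality, whereas you spell out the parity deduction and carry the arithmetic through to the exact value $c/2 = N_{up}-1$; this is the same proof with the details filled in.
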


The proof is similar to that for Lemma \ref{lem:chord}.
The only difference is that since $c$ is even and $c_1, c_2$ are odd,
$\lfloor c_1/2 \rfloor + 1 + \lfloor c_2/2 \rfloor + 1 < \lfloor c/2 \rfloor + 1$.
Note that for an odd polygon, all chords are odd. For a even polygon, Lemma \ref{lem:odd-chord} provides a guideline to select chords.
An example is illustrated in Fig. \ref{fig:LemmaThree}, which contains two chords $\bar{bh}$ and $\bar{hk}$.
Since the number of concave vertices to both side of chord $\bar{bh}$ are odd ($1$), $\bar{bh}$ is an odd-chord.
Cut along $\bar{bh}$, as shown in Fig. \ref{fig:LemmaThree}(a), can achieve two L-shots.
However, cut along another chord $\bar{hk}$, which is not an odd-chord, would need three shots.
Note that in an odd polygon, although all chords are odd, cutting along them may not reduce $N_{up}$, but it will not increase $N_{up}$ either.

\begin{figure}[bht]
  \centering
  \subfigure[]{\includegraphics[width=0.22\textwidth]{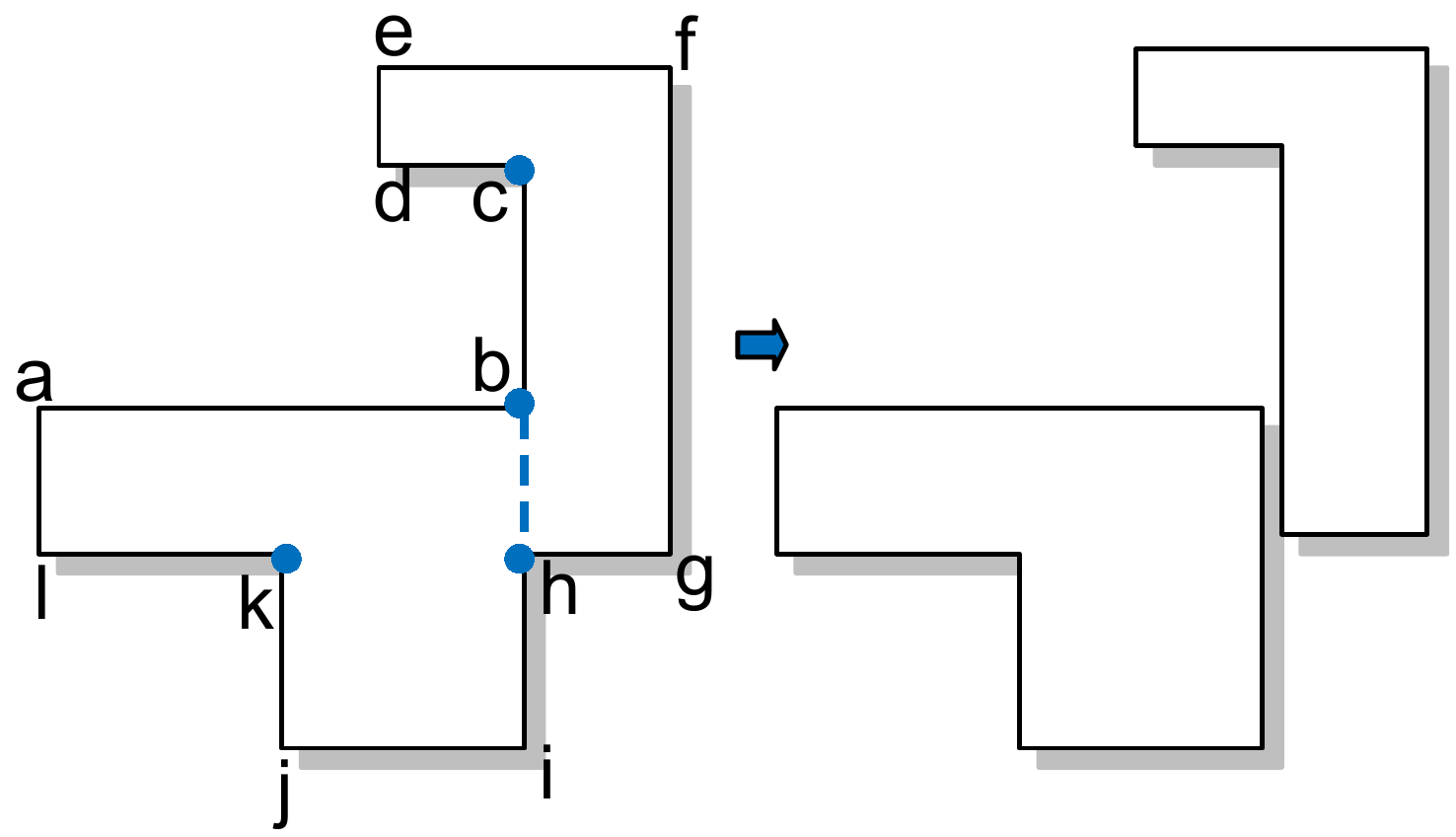}}
  \subfigure[]{\includegraphics[width=0.22\textwidth]{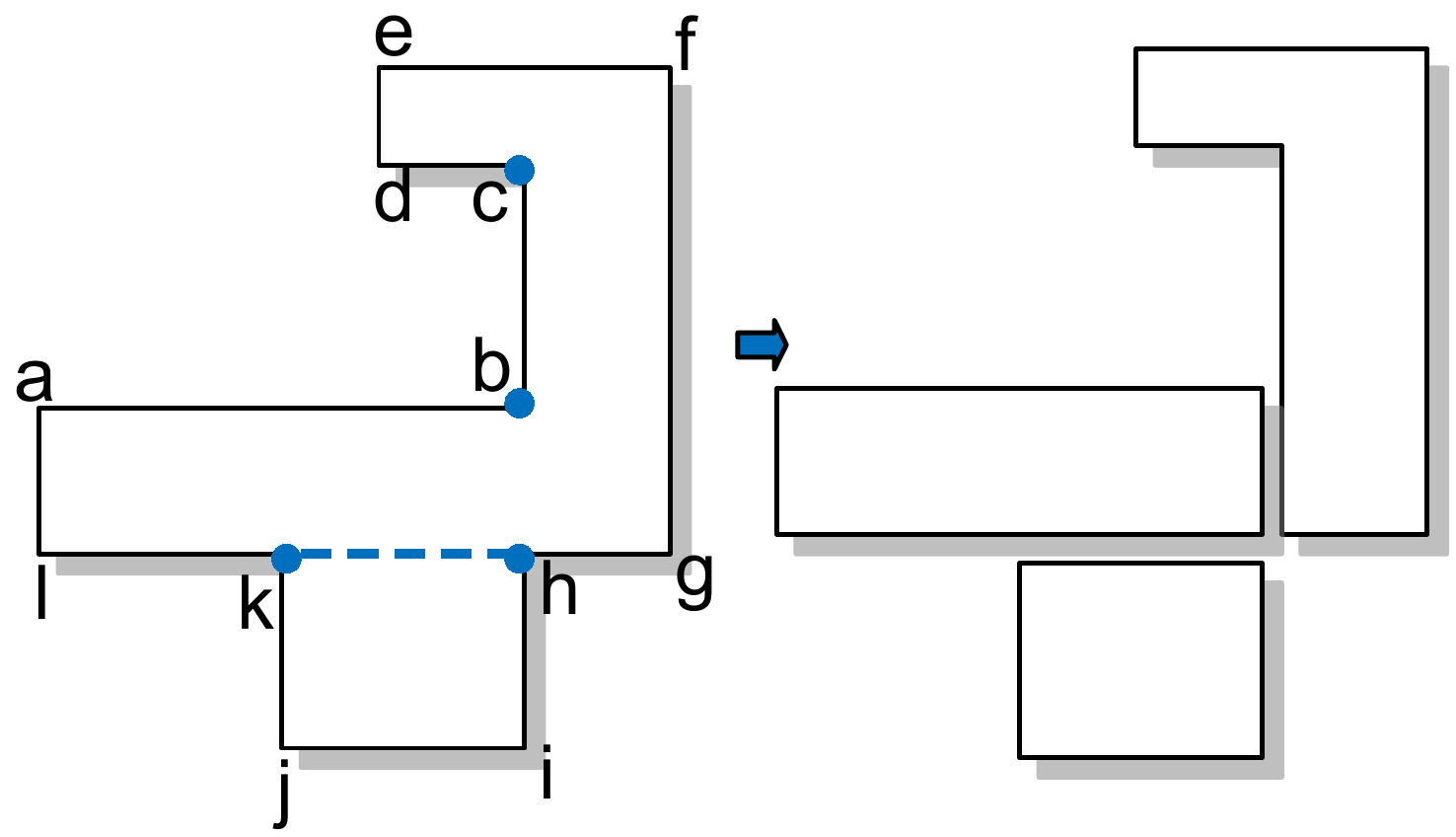}}
  \nocaptionrule
  \caption{Examples to illustrate Lemma \ref{lem:odd-chord}.~(a) Cut along odd-chord $\bar{bh}$ results in two L-shape shots.~(b) Cut along chord $\bar{hk}$ would cause one more shot.}
  \label{fig:LemmaThree}
  \vspace{-.1in}
\end{figure}

For any even polygon $P$, we propose the odd-chord search procedure as follows.
Each vertex $v_i$ is assigned with one Boolean parity $p_i$.
Starting from an arbitrary vertex with any parity assignment, we proceed clockwise around the polygon.
If the next vertex $v_j$ is concave, then $p_j = \neg p_i$, where $p_i$ is the parity of current vertex $v_i$.
Otherwise $p_j$ is assigned to $ p_i$.
This parity assignment can be completed during one clockwise traverse in $O(n)$ time.
An example of parity assignment starting from a(0) is shown in Fig. \ref{fig:OddChordDetection}, where each vertex is associated with one parity.

\begin{figure}[bht]
  \centering
  \vspace{-.1in}
  \includegraphics[width=0.2\textwidth]{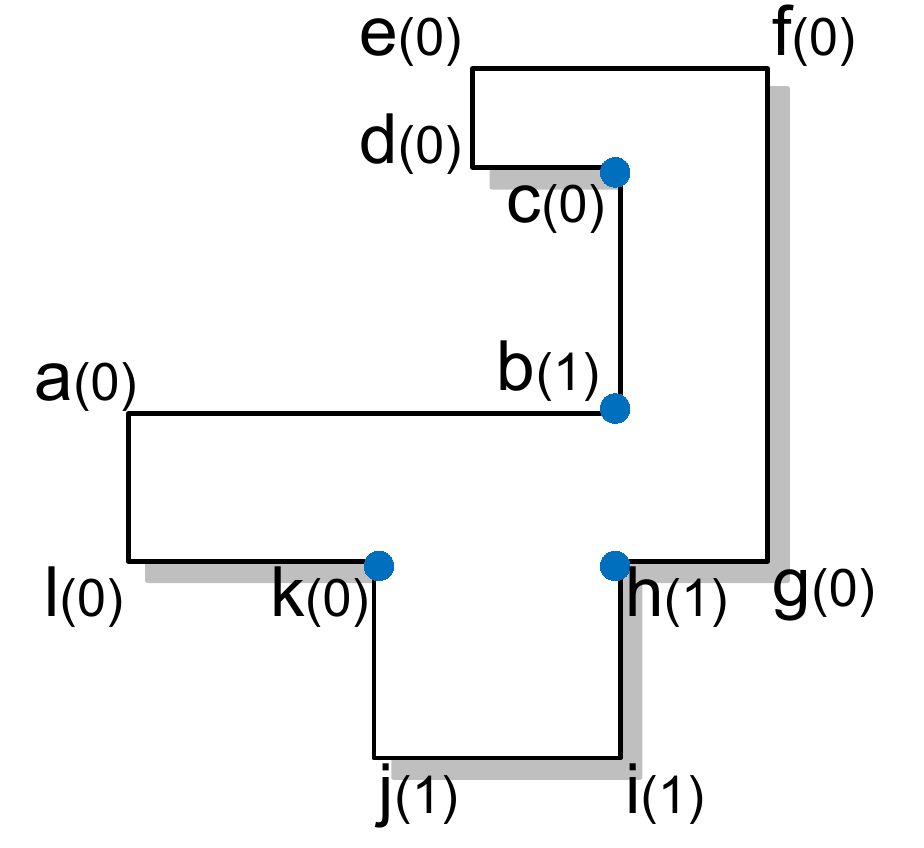}
  \nocaptionrule
  \caption{To detect odd-chords in even polygon, each vertex is associated with one Boolean parity.}
  \label{fig:OddChordDetection}
  \vspace{-.1in}
\end{figure}

\begin{mytheorem}
\label{theo:odd-chord}
In an even polygon, a chord $\bar{ab}$ is odd iff $p_a = p_b$.
\end{mytheorem}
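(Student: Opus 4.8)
The plan is to reduce the statement to a counting argument about how the Boolean parity changes as one walks along the polygon boundary. I would first fix the cyclic clockwise ordering of the boundary vertices used in the parity assignment. A chord $\bar{ab}$ joins two concave vertices $a$ and $b$, and removing these two points splits the boundary into two open arcs; let $c_1$ and $c_2$ denote the numbers of concave vertices lying strictly inside each arc. By the definition of a chord both $a$ and $b$ are concave, so $c = c_1 + c_2 + 2$, and $\bar{ab}$ is an odd chord precisely when $c_1$ (equivalently $c_2$) is odd.

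The crucial observation is that, by construction, the parity $p$ flips as we pass from one vertex to the next if and only if the vertex we arrive at is concave; at convex vertices it is unchanged. Hence the parity difference between any two vertices equals, modulo $2$, the number of concave vertices encountered strictly after the first and up to and including the second. Walking clockwise from $a$ to $b$ along the arc containing $c_1$ interior concave vertices, the concave vertices met after leaving $a$ are exactly those $c_1$ interior ones together with the endpoint $b$ itself. Therefore $p_b \equiv p_a + (c_1 + 1) \pmod 2$.

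From this congruence the equivalence follows immediately: $p_a = p_b$ holds iff $c_1 + 1$ is even, i.e. iff $c_1$ is odd, which is exactly the condition that $\bar{ab}$ is odd. Because $c$ is even and $c_1 + c_2 = c - 2$, the counts $c_1$ and $c_2$ share the same parity, so ``odd to one side or the other'' is unambiguous and the argument is symmetric in the two arcs. It is also independent of the arbitrary starting vertex and initial parity chosen in the traversal, since only the relative value of $p_a$ against $p_b$ enters.

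I expect the main obstacle to be the endpoint bookkeeping: one must decide carefully whether each of $a$ and $b$ contributes a parity flip. The rule flips the parity upon arriving at a concave vertex, so arriving at $b$ contributes a flip while the flip that produced $p_a$ is already absorbed into the starting value; getting this ``$+1$'' right (rather than an erroneous ``$+2$'' or ``$+0$'') is precisely what yields the clean conclusion that $p_a = p_b$ iff $c_1$ is odd. Everything else is routine parity arithmetic.
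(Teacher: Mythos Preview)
Your argument is correct. The key identity $p_b \equiv p_a + (c_1+1) \pmod 2$ follows exactly from the paper's parity rule (flip upon arriving at a concave vertex), and your endpoint bookkeeping---counting the $c_1$ interior concave vertices together with the concave endpoint $b$, but not $a$---is right. The observation that $c_1+c_2=c-2$ is even, so the two sides have equal parity, cleanly disposes of any ambiguity in ``odd to one side or another,'' and it is also what guarantees the parity labeling is globally consistent (the total number of flips around the boundary is the even number $c$), which you might make explicit.

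As for comparison: the paper states Theorem~\ref{theo:odd-chord} without proof, so there is no alternative argument to contrast with yours. Your counting argument is the natural one and is essentially forced by the definition of the parity assignment; it is almost certainly what the authors intended.
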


Given the parity values, the odd-chord detection can be performed using Theorem \ref{theo:odd-chord}.
For each concave vertex $v_i$, a plane sweep is applied to search any chord containing $v_i$.
The plane sweep for each vertex can be finished in $O(logn)$, and the number of vertices is $O(n)$.
Therefore, in an even polygon odd-chords detection can be completed in $O(nlogn)$ time.

After all chords are detected, chord selection is applied to choose as many chords as possible to divide the input polygon into a set of independent sub-polygons.
Note that if a chord is selected to cut the polygon, it releases the concavity of its two endpoints.
Therefore, if two chords intersect with each other, at most one of them could be selected.
For example, in Fig. \ref{fig:OddChordDetection}, chords $\bar{bh}$ and $\bar{hk}$ cannot be selected simultaneously.
The relationship among the chords can be represented as a bipartite graph \cite{EBL_SPIE06_Kahng},
and the vertices in left and right columns indicate the horizontal and vertical chords, respectively.
Therefore, finding the most chords compatible with each other corresponds to finding the maximum independent set in the bipartite graph,
which can be reduced to maximum matching problem, and therefore, can be done in polynomial time.
It shall be noted that if the input polygon is a even polygon, because of Theorem \ref{theo:odd-chord}, we prefer to choose odd-chords.
Therefore, the bipartite graph is modified by assigning weights to the edges.
In addition, sliver minimization is integrated into the chord selection.
When an odd-chord candidate is detected, we calculate the distance between it and the boundary of the polygon.
If the distance is less than $\epsilon$, cutting this odd-chord would cause sliver, then we will discard this candidate.


\vspace{-.1in}
\subsection{Sliver Aware L-Shape Fracturing}

After chord selection, the input polygon $P$ is decomposed into $m$ sub-polygons (denoted as $P_1, P_2, ..., P_m$).
For each polygon $P_i$, we will recursively fracture it until reaching final L-shapes and/or rectangles.
Our fracturing algorithm is based on odd-cut selection.
The main idea is that each time we pick up one odd-cut, and decompose the polygon into two pieces through this odd-cut.
Iteratively we fracture the polygon into several L-shapes.
Note that our fracturing algorithm considers the sliver minimization, i.e., we try to minimize the sliver length during fracturing.

The first question is how to detect all the odd-cuts efficiently.
%
Our method is similar to that for odd-chord detection.
Each vertex $v_i$ is assigned an order number $o_i$, and a Boolean parity $p_i$.
Start at an arbitrary vertex, each vertex $v_i$ is assigned an order $o_i$.
We initialize the Boolean parity $p$ to zero, and proceed clockwise around the polygon.
If the next vertex $v_i$ is normal, label its $p_i$ as $p$;
if $v_i$ is concave, assign $p$ to $\neg p$, and label its $p_i$ with the new $p$ value.
For each concave vertex $v_a$, we search cuts from two directions (horizontal and vertical) from it.
Here we denote $(a,\bar{bc})$ as the cut with one endpoint at vertex $v_a$ and the other endpoint at edge $\bar{bc}$.
For each cut $(a, \bar{bc})$ detected, whether it is an odd-cut can be checked in constant time using the following Theorem \ref{theo:odd-cut}.

\begin{mytheorem}
\label{theo:odd-cut}
In an odd polygon, a cut $(a, \bar{bc})$ is an odd-cut if and only if the following condition is satisfied:
\begin{displaymath}
	\left\{
	\begin{array}{cc}
		p_a = p_b, 			  & \textrm{if}\ \  o_a > o_b\\
		p_a \neq p_b, 		  & \textrm{if}\ \  o_a < o_b
	\end{array}
	\right.
\end{displaymath}
\end{mytheorem}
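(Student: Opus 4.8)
The plan is to reinterpret the Boolean parity combinatorially and then reduce the odd-cut test to a parity count of concave vertices along a boundary arc. First I would observe that the labeling rule makes $p_i$ exactly the parity of the number of concave vertices met from the starting vertex up to and including $v_i$ during the clockwise walk; writing $f(i)$ for that count, we have $p_i \equiv f(i) \pmod 2$, so for any two vertices $p_a \oplus p_b$ is the parity of the number of concave vertices whose order lies in the range delimited by $o_a$ and $o_b$.

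Next I would translate ``odd-cut'' into this language. The cut $(a,\bar{bc})$ splits the boundary of $P$ into two arcs meeting at $v_a$ and at the cut point $q \in \bar{bc}$. After the cut $v_a$ is no longer concave and $q$ is not a vertex, so the relevant quantity is $K$, the number of concave vertices strictly interior to one arc, say the clockwise arc from $v_a$ to $q$. Since $P$ is an odd polygon, the $c-1$ concave vertices other than $v_a$ split into $K$ and $c-1-K$ with $c-1$ even; hence the two sides have equal parity and the cut is an odd-cut iff $K$ is odd. The key geometric fact is that the interior of the edge $\bar{bc}$ contains no concave vertex, so counting concave vertices ``up to $q$'' along the walk is the same as counting up to the edge endpoint $v_b$ that immediately precedes $q$ in clockwise order; this is what lets $p_b$ stand in for the fractional position of $q$.

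Then I would split into the two cases of the theorem. When $o_a < o_b$ the clockwise arc from $v_a$ to $q$ does not wrap past the start, so $K = f(b) - f(a)$ and $K \equiv p_a \oplus p_b \pmod 2$; thus the cut is odd iff $p_a \neq p_b$. When $o_a > o_b$ the same arc wraps around the starting vertex: it collects every concave vertex of order greater than $o_a$ (there are $c - f(a)$ of them, automatically excluding $v_a$) together with those of order at most $o_b$ (there are $f(b)$, and $v_a$ is not among them since $o_b < o_a$), giving $K = (c - f(a)) + f(b)$. Reducing mod $2$ and using that $c$ is odd yields $K \equiv 1 \oplus p_a \oplus p_b$, so the cut is odd iff $p_a = p_b$. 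Combining the two cases gives exactly the stated condition.

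I expect the main obstacle to be the bookkeeping in the wraparound case: one must verify that $v_a$ is excluded from both groups and that no concave vertex is double counted, and it is precisely the odd value of $c$ entering through the wrapped count that flips the criterion from $p_a \neq p_b$ to $p_a = p_b$. Fixing the convention that $v_b$ is the endpoint of $\bar{bc}$ met just before $q$ (so that $o_a < o_b$ versus $o_a > o_b$ genuinely distinguishes the no-wrap and wrap situations) is the other point needing care; once that is pinned down, the remaining parity algebra is routine.
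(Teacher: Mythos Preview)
The paper does not actually prove this theorem; it states explicitly that ``due to space limit, the detailed proof is omitted.'' So there is no paper proof to compare against, and your proposal must be judged on its own.

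Your argument is correct. Interpreting $p_i$ as the mod-$2$ count $f(i)$ of concave vertices encountered up to and including $v_i$ is exactly what the labeling rule gives, and from there the odd-cut criterion reduces to the parity of $K=f(b)-f(a)$ in the non-wrapping case and $K=(c-f(a))+f(b)$ in the wrapping case. The use of oddness of $c$ to flip the criterion in the wraparound case is the crux, and you have it right. Your observation that $v_b$ may itself be concave but is still correctly counted (since $q$ lies strictly in the interior of $\bar{bc}$, so $v_b$ is strictly interior to the arc) is the right way to handle the edge-endpoint bookkeeping.

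The one genuine loose end you already identify: the theorem's statement is only well defined once one fixes which endpoint of the edge is called $b$. Your reading---$v_b$ is the endpoint encountered first when walking clockwise along $\bar{bc}$---is the convention that makes the case split $o_a<o_b$ versus $o_a>o_b$ coincide with ``no wrap'' versus ``wrap,'' and it is consistent with the paper's usage elsewhere (e.g., the update rule after a cut, where $\bar{bc}$ is replaced by $\bar{bd}$ and $\bar{dc}$). With that convention fixed, your proof is complete.
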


Due to space limit, the detailed proof is omitted.
An example of odd-cut detection is shown in Fig. \ref{fig:OddCutDetection}.
There are three concave vertices, $v_b, v_f$ and $v_i$ in the odd polygon.
Start from each concave vertex, we have searched all six cuts.
Applying Theorem \ref{theo:odd-cut}, we find out two odd-cuts $(b, \bar{fg})$ and $(i, \bar{cd})$.

\begin{figure}[bht]
  \centering
  \vspace{-.1in}
  \includegraphics[width=0.24\textwidth]{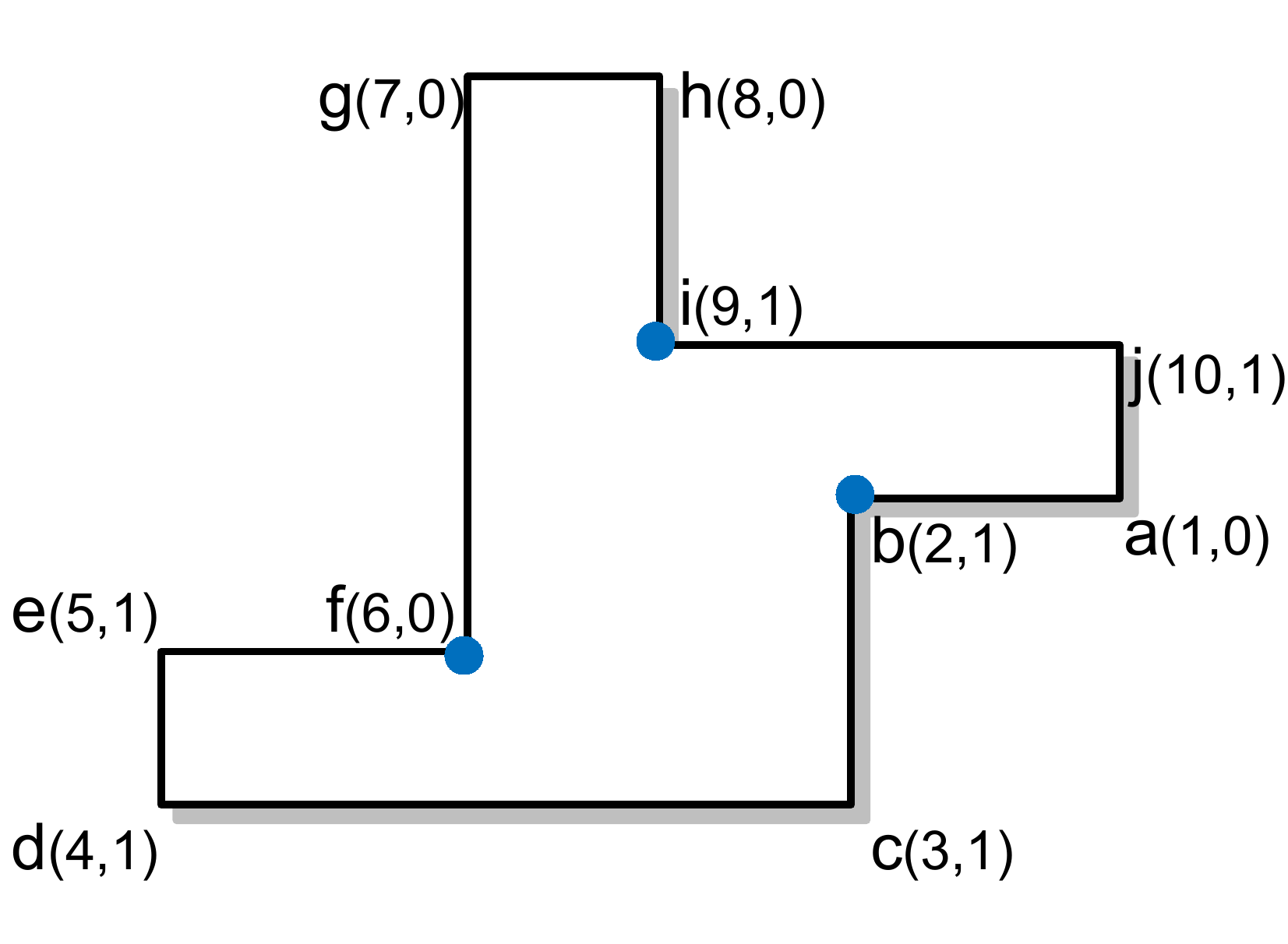}
  \nocaptionrule
  \caption{Odd-cut detection using order number and parity.}
  \label{fig:OddCutDetection}
  \vspace{-.1in}
\end{figure}

\begin{algorithm}[htb]
\caption{LShapeFracturing($P$)}
\label{alg:fracturing}
\begin{algorithmic}[1]
  \REQUIRE Polygon $P$.
  \IF{ $P$ is L-shape or rectangle}
    \STATE Output $P$ as one of results;
    \RETURN
  \ENDIF
  \STATE Find all odd-cuts;
  \STATE Choose cut $cc$ considering the sliver minimization;
  \IF{Cannot find legal odd-cut}
    \STATE Generate an auxiliary cut $cc$;
  \ENDIF
  \STATE Cut $P$ through $cc$ into two polygons $P1$ and $P2$;
  \STATE Update one vertex and four edges;
  \STATE LShapeFracturing($P1$);
  \STATE LShapeFracturing($P2$);
\end{algorithmic}
\end{algorithm}

The details of our L-shape fracturing are described in Algorithm \ref{alg:fracturing}.
Given the input polygon $P$, if it is already an L-shape or rectangle, then the fracturing is completed.
Otherwise, we find all odd-cuts as described above (line $5$).
From all the odd-cuts detected, we choose one $cc$, and cut the $P$ into two pieces $P_1$ and $P_2$ (lines $10 - 11$).
Then we recursively apply L-shape fracturing to $P_1$ and $P2$ (lines $12 - 13$).

Note that during the polygon decomposition, we do not need to re-calculate the order number and parity of each vertex.
Instead, when a polygon is divided into two parts, we only update one vertex and four edges, while all other information can be maintained.
If polygon $P$ is cut through odd-cut $(a, \bar{bc})$, a new vertex, namely $d$, is generated.
For the new vertex $d$, its order number $o_d = o_b$ and its parity $p_d = p_b$.
Edge $\bar{bc}$ is replaced by edges $\bar{bd}$ and $\bar{dc}$.
Besides, two edges $\bar{ad}$ and $\bar{da}$ are inserted.
The update method is simple and easy to implement.
An example of such update is shown in Fig. \ref{fig:update}.

\begin{figure}[bht]
  \centering
  \vspace{-.1in}
  \includegraphics[width=0.44\textwidth]{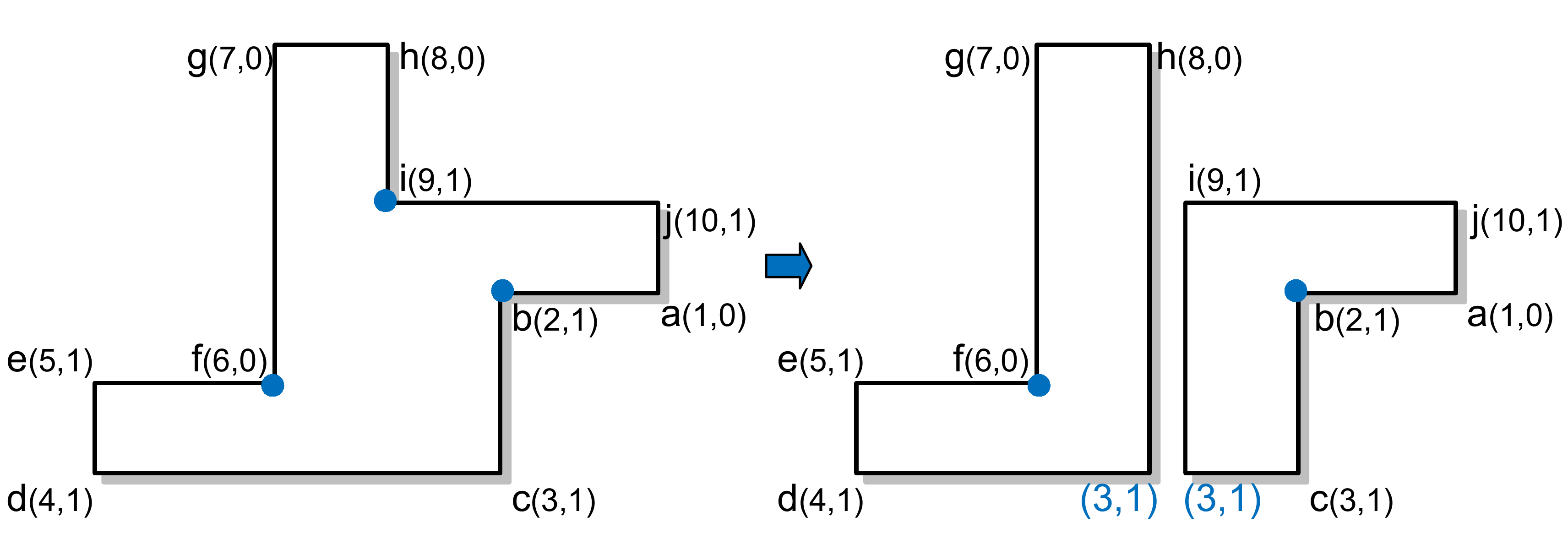}
  \nocaptionrule
  \caption{Only one vertex and four edges need to be updated during polygon decomposition.}
  \label{fig:update}
  \vspace{-.1in}
\end{figure}

Sliver minimization is integrated into our L-shape fracturing algorithm.
In Algorithm \ref{alg:fracturing}, when picking up one cut from all odd-cuts, we try to avoid any sliver.
For example, as illustrated in Fig. \ref{fig:SliverAware}(a), there are three odd-cuts, but all of them would cause sliver.
Instead of selecting any of them, we generate an auxiliary cut in the middle (see Fig. \ref{fig:SliverAware}(b)).
Because of the auxiliary cut, the polygon can be fractured without introducing any sliver.
In addition, if there are several odd-cuts not causing sliver, we pick the cut using the following rules:
(1) We prefer the cut which partitions the polygon into two balanced sub-polygons;
(2) If the polygon is more horizontal than vertical, we prefer a vertical cut, and vice verse.

\begin{figure}[bht]
  \centering
  \vspace{-.1in}
  \subfigure[]{\includegraphics[width=0.21\textwidth]{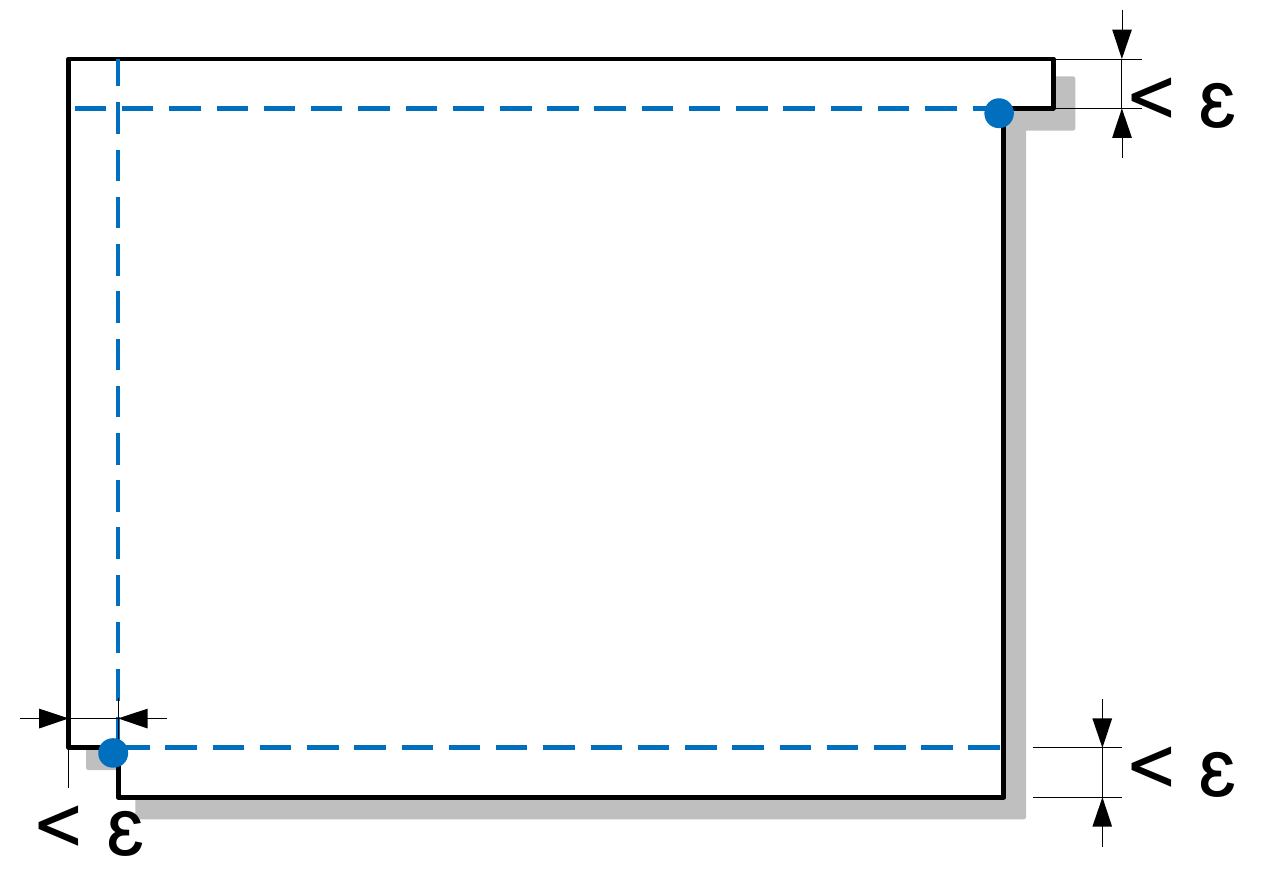}}
  \subfigure[]{\includegraphics[width=0.21\textwidth]{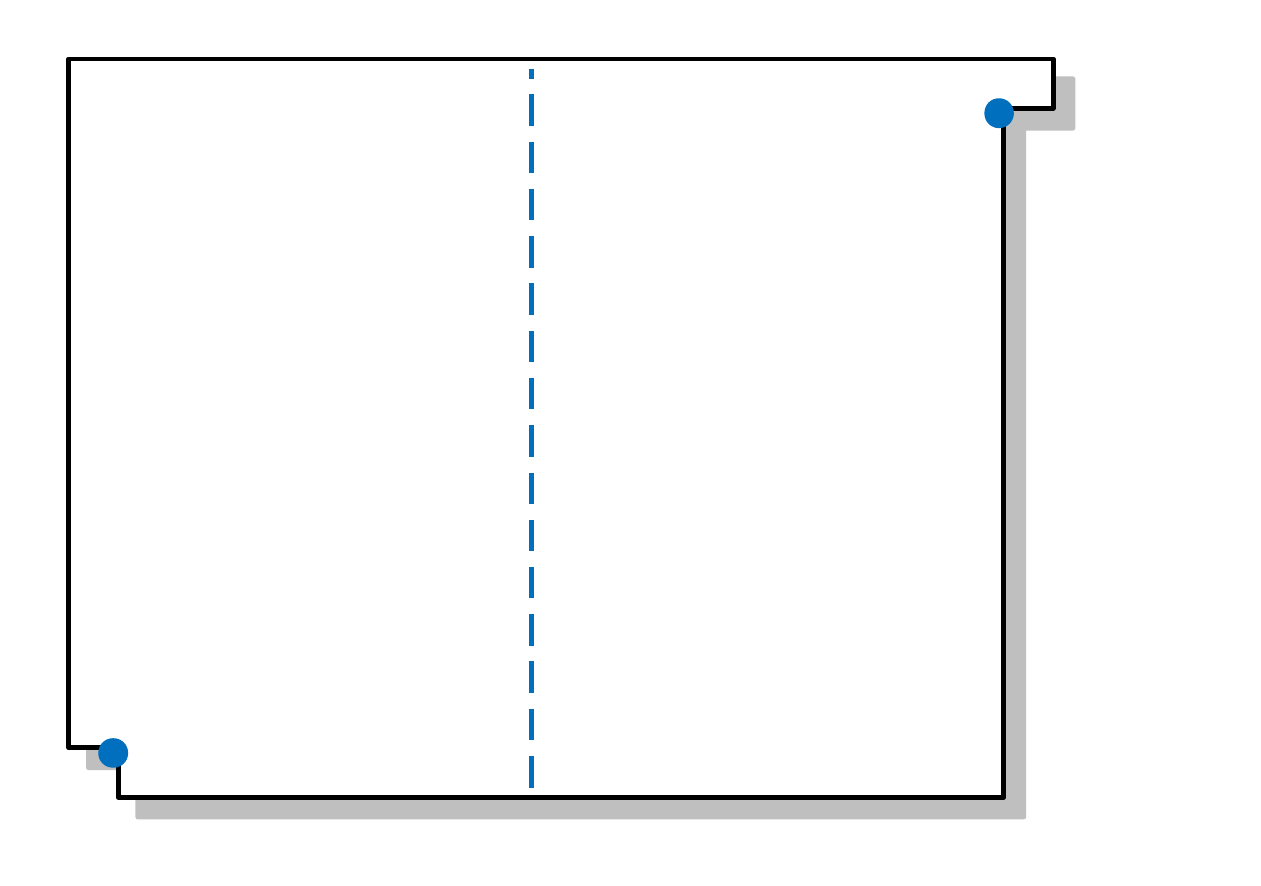}}
  \nocaptionrule
  \caption{Auxiliary cut generation.~(a) Here every odd-cut would cause sliver.~(b) Decompose through on auxiliary cut can avoid sliver.}
  \label{fig:SliverAware}
  \vspace{-.1in}
\end{figure}

Given a polygon with $n$ vertices, finding all concave vertices need $O(n)$ time.
For each concave vertex $v_i$, searching cut starting from it needs $O(logn)$ time.
Using Theorem \ref{theo:odd-cut}, checking whether the cut is odd-cut needs $O(1)$,
thus finding all odd-cuts needs $O(nlogn)$ time.
Note that given a polygon with $c$ concave vertices, if no auxiliary cut is generated,
the L-shape fracturing can be completed through $\lfloor c/2 \rfloor$ odd-cuts.
When auxiliary cuts are applied, there are at most $c-1$ cuts to fracture the input polygon.
Therefore, we can achieve the following theorem.

\begin{mytheorem}
The sliver aware L-shape generation can find a set of L-shapes in $O(n^2logn)$ time.
\end{mytheorem}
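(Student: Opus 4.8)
The plan is to bound the total running time by multiplying the cost of a single invocation of Algorithm~\ref{alg:fracturing} by the number of invocations, and then account for the incremental-update trick that keeps each invocation cheap. First I would establish the cost of one call on a polygon with $n$ vertices. Inside a single call we (i) test whether $P$ is an L-shape or rectangle, which is $O(1)$ given that we track the concave-vertex count $c$; (ii) find all odd-cuts; and (iii) select a cut (or generate an auxiliary cut) and split $P$. For step (ii), there are $O(n)$ concave vertices, and for each we launch a plane sweep in the two axis directions to locate the cut's far endpoint; as already argued in the text, each such sweep costs $O(\log n)$ and the parity/order test of Theorem~\ref{theo:odd-cut} verifies oddness in $O(1)$. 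Hence finding all odd-cuts costs $O(n\log n)$ per call, and this dominates the per-call work since the split itself updates only one vertex and four edges in $O(1)$.

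Next I would bound the number of recursive calls. The key observation, stated in the excerpt, is that each odd-cut releases the concavity of its endpoint, so one cut strictly reduces the concave-vertex count of the subproblem; in the worst case (when auxiliary cuts are forced) the recursion still terminates after at most $c-1$ cuts. Since each cut produces exactly one additional subpolygon, the total number of nodes in the recursion tree is $O(c) = O(n)$, using the relation $n = 2c+4$ from \cite{JOG83_Rourke}. Multiplying the $O(n\log n)$ per-call cost by the $O(n)$ calls yields the claimed $O(n^2\log n)$ bound.

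The step I expect to require the most care is ensuring the per-call cost genuinely stays $O(n\log n)$ throughout the recursion rather than degrading. Two subtleties arise. First, the incremental update must be shown to correctly preserve all order numbers and parities in both children so that Theorem~\ref{theo:odd-cut} remains applicable without an $O(n)$ recomputation — this is exactly what the ``update one vertex and four edges'' argument buys us, and I would verify that the inherited $o_d = o_b$, $p_d = p_b$ assignment keeps the invariants intact. Second, I would note that the per-call cost is stated in terms of the original $n$ rather than the local subpolygon size; summing $O(n\log n)$ over $O(n)$ calls is a loose but valid worst-case bound, and a sharper amortized analysis over actual subpolygon sizes is unnecessary for the stated theorem. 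The auxiliary-cut case needs only a remark that generating the middle cut and splitting is also $O(n\log n)$ (dominated by the preceding odd-cut search), so it does not affect the asymptotic bound.
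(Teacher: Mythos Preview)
Your proposal is correct and follows essentially the same argument as the paper: bound the per-call work by $O(n\log n)$ via the $O(\log n)$ plane sweep from each of the $O(n)$ concave vertices plus the $O(1)$ parity test, bound the number of cuts by $c-1$ (hence $O(n)$) in the worst case with auxiliary cuts, and multiply. Your additional remarks about preserving parities under the incremental update and about the looseness of using the global $n$ for every subpolygon are more explicit than the paper's own exposition, but they do not depart from its approach.
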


It shall be noted that if our objective is only to minimize the shot number,
no auxiliary cut would be introduced, thus at most $\lfloor c/2 \rfloor + 1$ L-shapes are generated.
In other words, the shot number would be less or equals to the theoretical upper bound $N_{up}$.


\vspace{-.1in}
\subsection{Speedup Technique}

We observe that in practice during the execution of Algorithm \ref{alg:fracturing}, many odd-cuts do not intersect.
In other words, many odd-cuts are compatible, and could be used to decompose the polygon at the same time.
Instead of only picking one odd-cut at one time, we can achieve further speed-up by selecting multiple odd-cuts simultaneously.

If the polygon is an odd polygon, this speed-up is easily implemented.
In the odd polygon, there is only one type of odd-cut: a cut that has an odd number of concave vertices to each side.
Partitioning the polygon along such odd-cut can leave all other odd-cuts remaining to be odd-cuts.
For example, Fig. \ref{fig:SpeedupOdd}(a) shows an odd polygon, where all three odd-cuts are compatible, and can be picked up simultaneously.
Through fracturing the polygon along the three odd-cuts, the L-shape fracturing problem is resolved directly.

\begin{figure}[bht]
  \centering
  \vspace{-.1in}
  \includegraphics[width=0.25\textwidth]{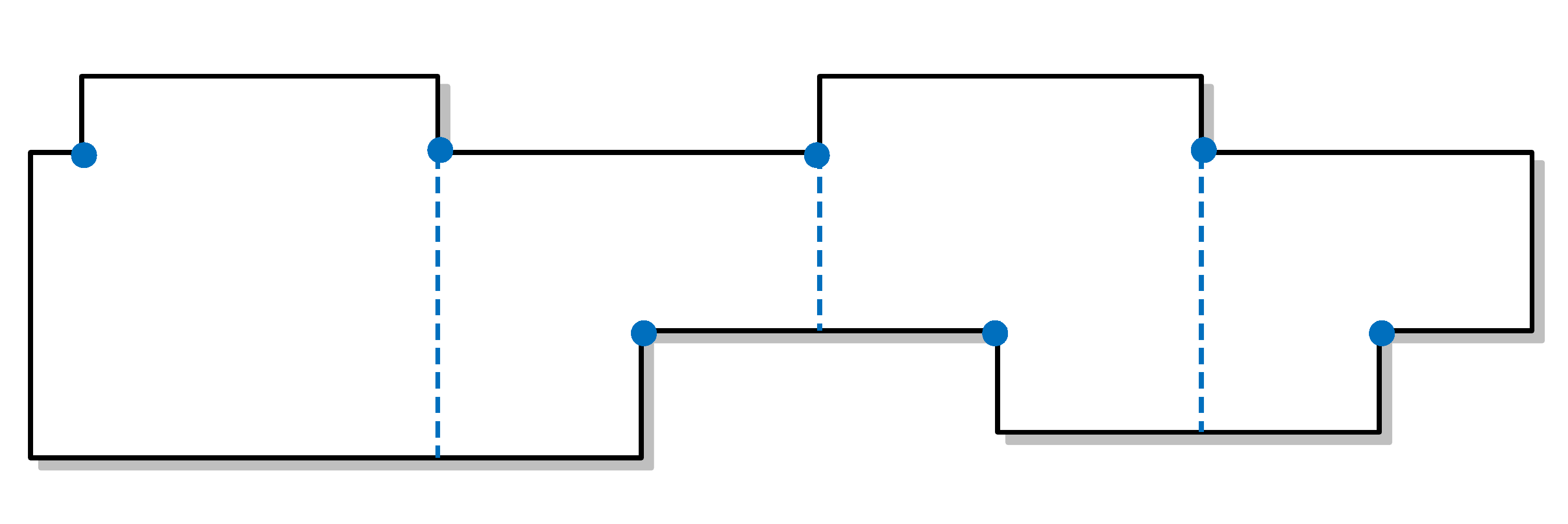}
  \nocaptionrule
  \caption{Speed-up for odd polygon, where all three odd-cuts are compatible.}
  \label{fig:SpeedupOdd}
  \vspace{-.1in}
\end{figure}

However, this speed-up technique cannot be directly applied to an even polygon, since it may cause more shot number.
The reason is that when an even polygon is cut into two pieces, some odd-cuts may no longer be odd-cuts in the sub-polygons.
For example, as shown in Fig. \ref{fig:SpeedupEven}(a), in this even-polygon all six cuts are odd-cuts and compatible.
However, if we use all these compatible cuts for fracturing, we would end up with seven rectangular shots, which is obviously sub-optimal.
To overcome this issue, for each even-polygon we introduce one artificial concave vertex.
Through this artificial concave vertex, the polygon is translated into an odd polygon.
Because of Lemma \ref{lem:artificial}, this translation does not increase the total shot number.
As shown in Fig. \ref{fig:SpeedupEven}(b), in the modified odd polygon,
all compatible odd-cuts can be used for fracturing without causing more shot number.

\begin{mylemma}
\label{lem:artificial}
Introducing one artificial concave vertex to an even polygon does not increase the L-shape upper bound $N_{up}$.
\end{mylemma}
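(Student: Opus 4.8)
The plan is to reduce the claim to a one–line computation with the floor formula of Lemma~\ref{lem:concave_vertice}, after first pinning down exactly how the concave-vertex count changes. Let $P$ be the even polygon, so its number of concave vertices $c$ is even and, by Lemma~\ref{lem:concave_vertice}, its bound is $N_{up}(P) = \lfloor c/2 \rfloor + 1 = c/2 + 1$. First I would argue that inserting one artificial concave vertex produces a new rectilinear polygon $P'$ whose concave-vertex count is exactly $c+1$: the construction adds a single reflex ($270^o$) corner, and the relation $n = 2c+4$ of \cite{JOG83_Rourke} continues to hold for $P'$ provided it remains a simple rectilinear polygon. Note that adding one concave vertex forces an accompanying convex vertex (the invariant that convex-minus-concave equals $4$ is preserved under any such boundary modification), but this is harmless because $N_{up}$ depends only on the concave count. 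In particular $P'$ is now an odd polygon.

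Next I would simply apply Lemma~\ref{lem:concave_vertice} to $P'$, whose upper bound is
\begin{displaymath}
N_{up}(P') = \left\lfloor \frac{c+1}{2} \right\rfloor + 1 .
\end{displaymath}
Because $c$ is even, $\lfloor (c+1)/2 \rfloor = c/2 = \lfloor c/2 \rfloor$, so $N_{up}(P') = c/2 + 1 = N_{up}(P)$. Hence the upper bound is exactly preserved, which in particular means it does not increase, giving the lemma.

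The hard part will not be the arithmetic but the modelling step: justifying that ``introducing one artificial concave vertex'' is a legitimate operation that raises $c$ by exactly one, keeps Lemma~\ref{lem:concave_vertice} applicable, and does not secretly cost an extra shot when the result is carried back to $P$. I would make this precise by placing the artificial vertex on an existing edge so that $P'$ becomes odd while its interior coincides with that of $P$ up to a zero-area region; the role of the vertex is purely to flip the parity used by the odd-cut detection of Theorem~\ref{theo:odd-cut} so that the simultaneous-odd-cut speed-up applies. With the interior unchanged, any L-shape fracturing computed on $P'$ is also a valid fracturing of $P$ with the same shot count, so the preserved bound transfers back to $P$. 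Once this well-definedness is established, the floor computation above closes the argument.
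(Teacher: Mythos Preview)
The paper does not actually supply a proof of Lemma~\ref{lem:artificial}; it is stated without justification. Your argument via the floor formula of Lemma~\ref{lem:concave_vertice} is correct and is exactly the computation one would expect, mirroring how the paper handles Lemmas~\ref{lem:chord} and~\ref{lem:odd-chord}: with $c$ even, $\lfloor (c+1)/2\rfloor + 1 = c/2 + 1 = \lfloor c/2\rfloor + 1$, so the bound is unchanged.

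Your caveat about the modelling step is well placed and is, if anything, more careful than the paper. The paper's ``artificial concave vertex'' (cf.\ Fig.~\ref{fig:SpeedupEven}) is really a bookkeeping device to flip the parity for odd-cut detection rather than a genuine geometric deformation; your treatment of it as a zero-area perturbation on an edge, so that any fracturing of $P'$ is also a fracturing of $P$, is a clean way to make this rigorous. One small refinement: rather than invoking the convex-minus-concave invariant to justify that a true rectilinear polygon with exactly one more concave vertex exists, it is simpler (and closer to the paper's intent) to say that the ``artificial'' vertex is purely formal and only affects the parity labels used in Theorem~\ref{theo:odd-cut}, so Lemma~\ref{lem:concave_vertice} is being applied with $c$ replaced by $c+1$ as a counting argument, not to a genuinely different polygon.
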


\begin{figure}[bht]
  \centering
  \vspace{-.1in}
  \hspace{-.1in}
  \subfigure[]{\includegraphics[width=0.23\textwidth]{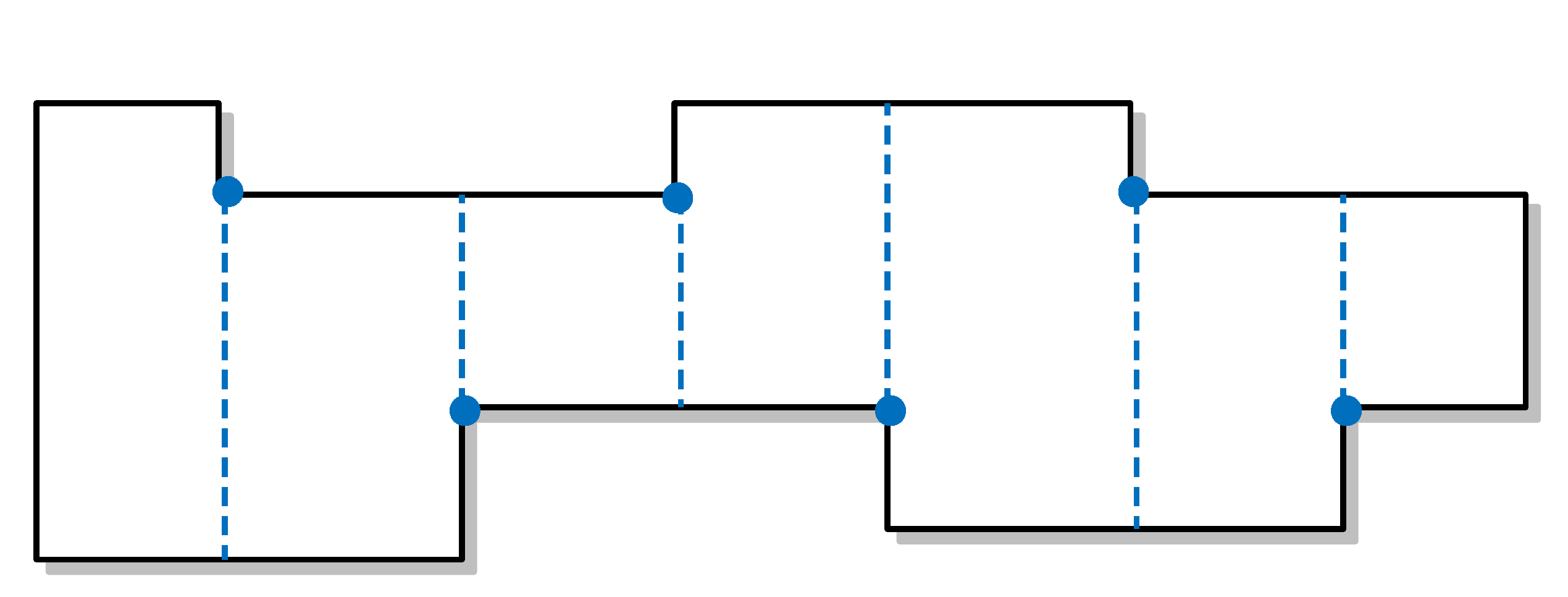}}
  \hspace{-.1in}
  \subfigure[]{\includegraphics[width=0.23\textwidth]{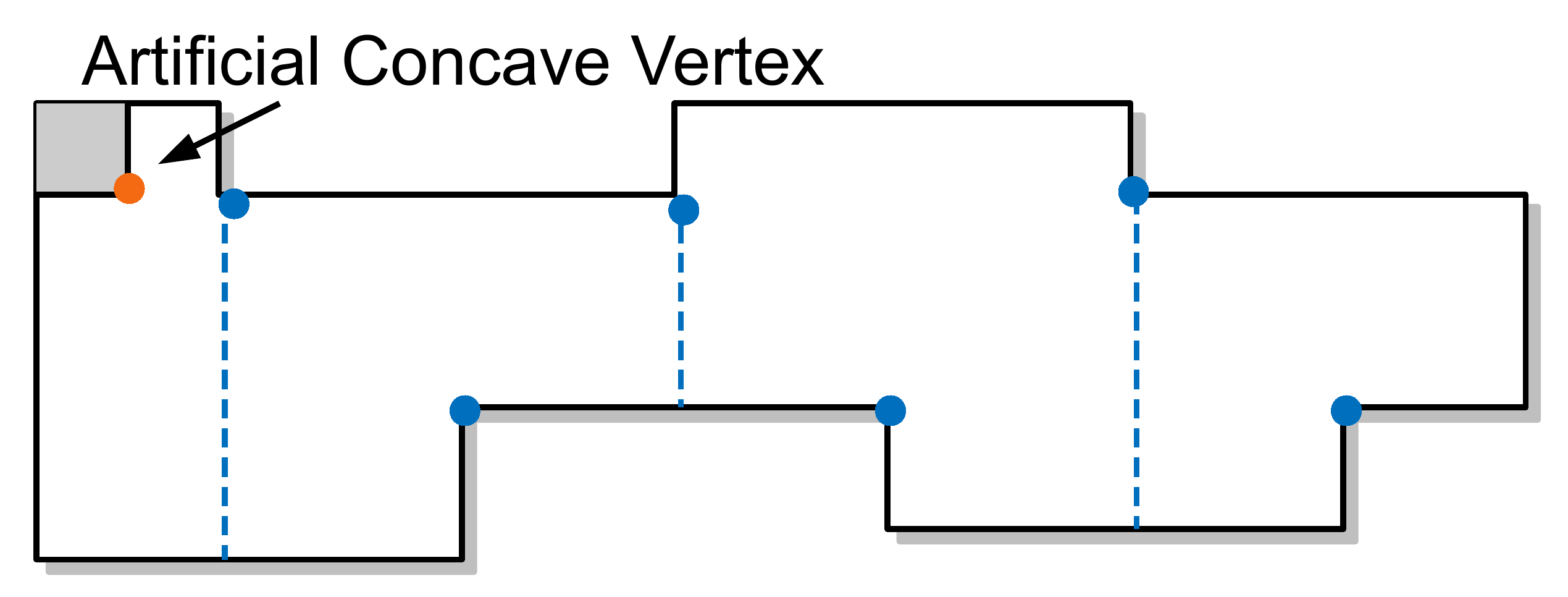}}
  \nocaptionrule
  \caption{Speed-up for even polygon.~(a) all cuts are odd-cuts.~(b) Introducing one artificial concave vertex, translate the even polygon into an odd polygon.}
  \label{fig:SpeedupEven}
  \vspace{-.1in}
\end{figure}


Through employing this speed-up technique, for most cases, the odd-cut detection can be applied only once,
therefore the DLF algorithm could be completed in $O(nlogn)$ time in practice.


\vspace{-.1in}
\section{Experimental Results}
\label{sec:result}

We implemented our two L-shape fracturing algorithms, RM and DLF, in C++.
Since RM needs a rectangular fracturing method to generate initial rectangles,
we implemented a state-of-the-art algorithm proposed in \cite{EBL_SPIE2011_Ma}.
Based on the generated rectangles, RM algorithm is applied to merge them into a set of L-shapes.
LEDA package \cite{book99_LEDA} is adopted for the maximum weighted matching algorithm.

The experiments are performed on an Intel Xeon 3.0GHz Linux machine with 32G RAM.
ISCAS 85\&89 benchmarks are scaled down to 28nm logic node, followed by accurate lithographic simulations performed to the Metal $1$ layers.
All involved lithography simulations in the \textit{Calibration Phase} are applied under industry-strength RET (OPC).
For all the resulting post-OPC layers, OpenAccess 2.2 \cite{OpenAccess} is adopted for interfacing.

\begin{table*}[bth]
\footnotesize
\centering
\nocaptionrule
\caption{Runtime and Performance Comparisons}
\label{tab:compare}
\begin{tabular}{|c|c|c|c|c||c|c|c||c|c|c|}
  \hline \hline
  \multirow{2}{*}{Circuts}  &\multirow{2}{*}{poly\#}
  &\multicolumn{3}{c||}{\cite{EBL_SPIE2011_Ma}} &\multicolumn{3}{c||}{\cite{EBL_SPIE2011_Ma}+RM} & \multicolumn{3}{c|}{DLF}\\
  \cline{3-11}
          &       &shots&sliver ($\mu$m)&CPU(s)    &shots&sliver ($\mu$m)&CPU(s)    &shots&sliver ($\mu$m)&CPU(s) \\
  \hline
  C432    &1109	  &6898	    &48.3    &8.51		   &4371   &23.5  &10.0  	      &4214	     &7.4	 &1.87  \\ 
  C499    &2216	  &13397	&96.0    &16.9		   &8325   &45.0  &19.5  	      &8112	     &11.8	 &2.6   \\ 
  C880    &2411	  &17586	&160.5	 &24.93		   &11020  &84.4  &29.7  	      &10653	 &28.6	 &3.8   \\ 
  C1355   &3262	  &23283	&185.2	 &29.44		   &14555  &87.8  &33.6  	      &13936	 &24.8	 &5.1   \\ 
  C1908   &5125	  &35657	&333.6	 &48.78		   &22352  &181.1 &57.3           &21540	 &88.0	 &7.68  \\ 
  C2670   &7933	  &56619	&525.4	 &84.11		   &35424  &274.4 &96.9  	      &34102	 &114.8	 &11.89 \\ 
  C3540   &10189  &74632	&668.5	 &114.33	   &46617  &360.0 &133.7 	      &44901	 &129.8	 &15.98 \\ 
  C5315   &14603  &108761	&950.4	 &176.89	   &67795  &488.9 &200.8          &65222	 &190.2	 &23.85 \\ 
  C6288   &14575  &103148	&819.2	 &175.65	   &64987  &382.0 &201.0          &62416	 &86.1	 &22.64 \\ 
  C7552   &21253  &151643	&1334.6	 &242.77	   &94902  &717.6 &280.0          &91157	 &290.7	 &32.02 \\ 
  S1488   &4611	  &37126	&303.7	 &55.03		   &22984  &146.2 &64.7           &22099	 &31.6	 &8.14  \\  
  S38417  &67696  &454307   &4040.2  &727.1        &285049 &2293.0&1020           &275054    &729    &88.5  \\
  S35932  &26267  &163956   &1470.4  &228.02       &103960 &808.3 &256.4          &100629    &284    &34.85 \\
  S38584  &168319 &1096363  &10045.2 &2268.6       &690054 &5777.0&3565.2         &666906    &1801.7 &216.39\\
  S15850  &34660  &231681   &2012.8  &329.99       &145745 &1085.1&414.6          &140879    &320    &44.7  \\
  \hline
  avg.    & -     &171670   &1533.0  &302.1        &107876 &850.3 &425.6          &104121    &275.9  &34.7  \\
  ratio   & -     &1        &1       &1            &0.63   &0.55  &1.41           &\textbf{0.61}&\textbf{0.18}&\textbf{0.11}\\
  \hline \hline
\end{tabular}
\end{table*}

Table \ref{tab:compare} shows the results of our DLF algorithm in comparison with the approaches in \cite{EBL_SPIE2011_Ma} and the RM algorithm.
Since the framework \cite{EBL_SPIE2011_Ma} is adopted to provide the input rectangles, the RM algorithm is denoted as ``\cite{EBL_SPIE2011_Ma}+RM''.
Column ``poly\#'' lists the number of polygons of each test circuit.
All fracturing methods are evaluated with the sliver parameter $\epsilon = 5nm$.
For each method, columns ``shots'', ``sliver'', and ``CPU'' denote the shot number, total sliver length, and runtime, respectively.
First we compare the fracturing algorithm in \cite{EBL_SPIE2011_Ma} and the RM algorithm.
From the table we can see that as an incremental algorithm, the RM algorithm can further reduce the shot number by 37\%, and the sliver length by 45\%.
Meanwhile, the runtime increasing is reasonable: RM algorithm introduces 41\% more runtime.
Besides, we compare our DLF algorithm with other two methods.
We can see that DLF demonstrates the best performance, in terms of both runtime and performance.
Compared with traditional sliver aware rectangular fracturing \cite{EBL_SPIE2011_Ma}, it can achieve around 9$\times$ speed-up.
Besides, the shot number and the sliver length can be significantly reduced (39\% and 82\%, respectively).
Even compared with RM algorithm, DLF is better in terms of performance: it can further reduce the shot number and the sliver length by 3.2\% and 67\%, respectively.

\begin{figure}[bht]
  \centering
  \vspace{-.1in}
  \includegraphics[width=0.44\textwidth]{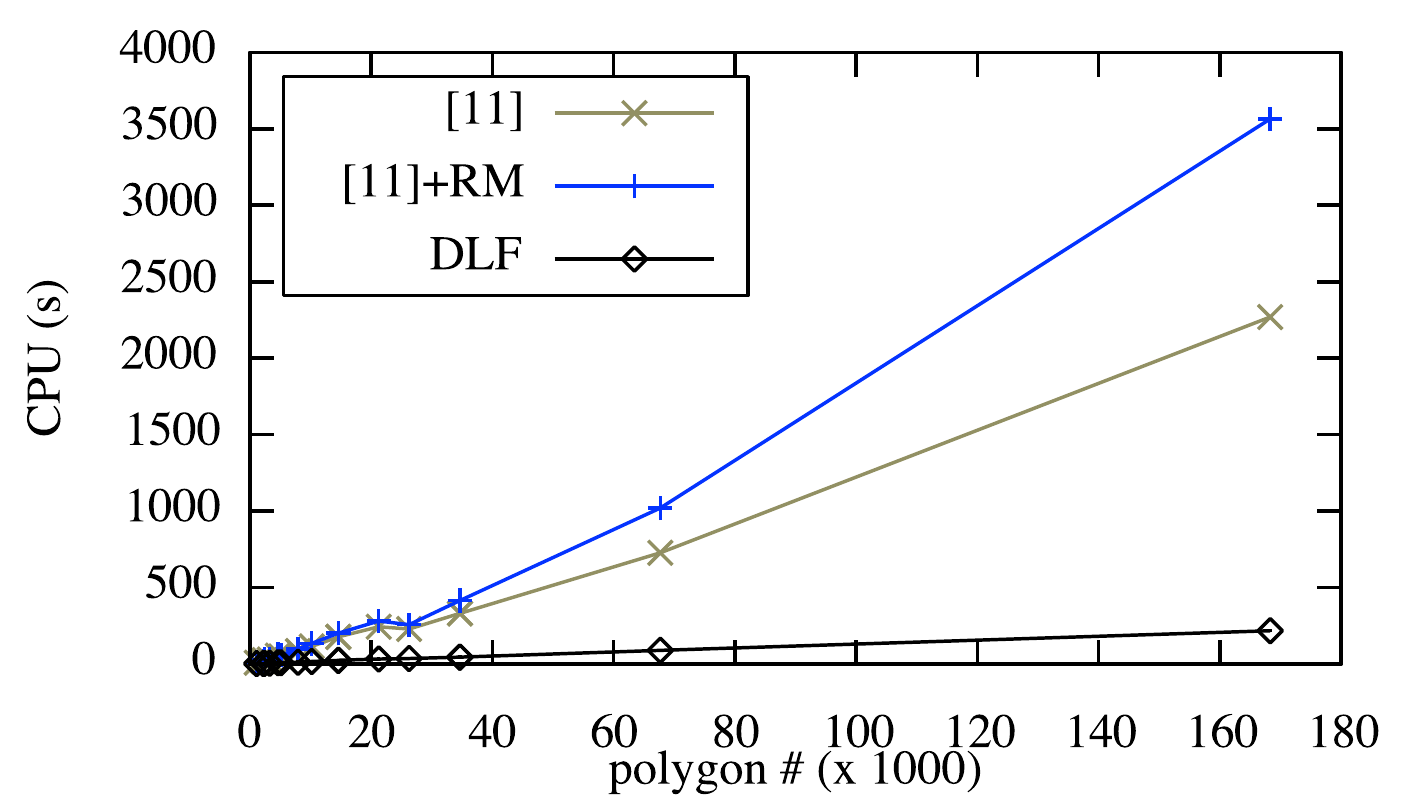}
  \nocaptionrule
  \caption{Comparison on algorithm scalability.}
  \label{fig:scalability}
  \vspace{-.1in}
\end{figure}

In order to evaluate the scalability of our algorithm, we summarize all the run time from Table \ref{tab:compare},
and display in Fig. \ref{fig:scalability}.
Here the X axis denotes the number of polygons (e.g., the problem size), and the Y axis shows the runtime.
We can see that DLF algorithm scales better than both \cite{EBL_SPIE2011_Ma} and RM algorithm.

\vspace{-.1in}
\section{Conclusions}
\label{sec:conclu}

In this paper we have proposed two novel algorithms for EBL with the new L-shape based layout fracturing for shot number and sliver minimization. 
The rectangular merging (RM) based algorithm is optimal for a given set of rectangular fractures.  However, to get better performance, we show that the direct L-shape fracturing (DLF) algorithm is superior by directly decomposing the original layouts into a set of L-shapes. 
DLF obtained the best results in all metrics, including shot count, sliver, as well as runtime compared to the previous state-of-the-art rectangular fracturing with RM.
To our best knowledge, this is the first systematic and algorithmic effort in EBL L-shaped fracturing with sliver minimization. As EBL is widely used for mask making and also gaining momentum for direct wafer writing, we believe a lot more research can be done, for not only layout fracturing but also EBL-aware physical design. 


\vspace{-.1in}
\section*{Acknowledgment}

This work is supported in part by NSF and NSFC.

{
\vspace{-.1in}
\bibliographystyle{IEEEtran}
\bibliography{Ref/Bei,Ref/Algorithm,Ref/EBL,Ref/Lithography,Ref/DPL}
}

\end{document}